\newtheorem{theorem}{Theorem}
\newtheorem{lemma}{Lemma}
\newtheorem{corollary}{Corollary}
\newtheorem{definition}{Definition}
\newtheorem{remark}{Remark}
\renewcommand{\paragraph}[1]{\vspace{0.25cm} \noindent{\bf #1}}
\newcommand{\algoHeadNoBold}[1]{ \underline{#1}}
\algnewcommand{\ExtendedState}[1]{\State
\parbox[t]{\dimexpr\linewidth-\ALG@thistlm}{\hangindent=\algorithmicindent\strut\hangafter=3#1\strut}}
\algnewcommand\algorithmicinput{\textbf{Input:}}
\algnewcommand\Input{\item[\algorithmicinput]}
\algrenewcommand{\algorithmiccomment}[1]{{\color{gray}// #1}}
\algnewcommand{\IIf}[1]{\State\algorithmicif\ #1\ \algorithmicthen}
\algnewcommand{\EndIIf}{\unskip\ \algorithmicend\ \algorithmicif}
\newtcolorbox{titlebox}[5]{enhanced,center,colframe=black,colback=white,boxrule={#3},arc={#2},auto outer arc,%
 breakable,pad at break*=5pt,vfill before first,before={},after={},top=12pt,left=4pt,%
 enlarge top by=2pt,
 fontupper=\small,
 title={\rule[-.3\baselineskip]{0pt}{\baselineskip}\normalsize\sffamily\bfseries #1}, varwidth boxed title*=-30pt, 
 attach boxed title to top left={yshift=-10pt,xshift=10pt}, coltitle=black,
 boxed title style={colback=white,boxrule={#5},arc={#4},auto outer arc}
}
 \newenvironment{dianabox}[1]
 {\begin{titlebox}{\normalfont #1}{0.5pt}{0.5pt}{1pt}{0.75pt}}
 {\end{titlebox}}
\let\orgdescriptionlabel\descriptionlabel
\renewcommand*{\descriptionlabel}[1]{%
  \let\orglabel\label
  \let\label\@gobble
  \phantomsection
  \edef\@currentlabel{#1}%
  \let\label\orglabel
  \orgdescriptionlabel{#1}%
}
\let\emptyset\varnothing
\DeclarePairedDelimiter\abs{\big\lvert}{\big\rvert}
\newcommand{\party}{p\xspace}
\newcommand{\inputt}[0]{\textsc{in}}
\newcommand{\outputt}[0]{\textsc{out}}
\newcommand{\approximateagreement}[0]{\mathcal{AA}}
\newcommand{\convexagreement}[0]{\mathcal{CA}}
\newcommand{\integers}[0]{\mathbb{Z}}
\newcommand{\realvalues}[0]{\mathbb{R}}
\newcommand{\realAA}[0]{\textsf{RealAA}}
\newcommand{\treeAA}[0]{\textsf{TreeAAFromPaths}}
\newcommand{\finalTreeAA}[0]{\textsf{TreeAA}}
\newcommand{\closestInt}[1]{[#1]}
\newcommand{\hull}[1]{\langle{#1}\rangle}
\newcommand{\dfs}{\textsc{dfs}}
\newcommand{\listConstruction}{\textsf{ListConstruction}}
\newcommand{\vertices}[0]{\textsf{V}}
\newcommand{\diameter}[0]{\textsf{D}}
\newcommand{\distance}[0]{\textsf{d}}
\newcommand{\tree}[0]{T}
\newcommand{\graph}[0]{G}
\newcommand{\pathh}[0]{P}
\newcommand{\roott}[0]{v_\textsf{root}}
\newcommand{\lca}[0]{\textsf{lca}}
\newcommand{\projection}[0]{\textsf{proj}}
\newcommand{\roundT}[0]{t}
\newcommand{\listconstruction}[0]{\textsf{ListConstruction}}
\newcommand{\pathfinder}[0]{\textsf{PathsFinder}^0}
\newcommand{\gradecast}[0]{\mathcal{GC}}
\newcommand{\gradecastprotocol}[0]{\textsf{GC}}
\newcommand{\foxpathfinder}[0]{\textsf{PathsFinder}}
\newcommand{\startt}[0]{\textsf{start}}
\newcommand{\last}[0]{\textsf{end}}
\newcommand{\lcp}[0]{\textsf{LCP}}
\newcommand{\treeAAOld}[0]{\textsf{TreeAA}^0}
\newcommand{\pki}[0]{\textsf{PKI}}
\newcommand{\blockgraphAA}[0]{\textsf{BlockGraphAA}}
\title{Round and Resilience-Optimal Approximate Agreement \\ on Trees and Block Graphs}
\date{}
\author[1]{Marc Fuchs\,\orcidlink{0000-0003-2272-4483}}
\author[2]{Diana Ghinea\,\orcidlink{0000-0002-5294-9459}}
\author[1]{Zahra Parsaeian\,\orcidlink{0009-0006-3848-1796}}
\author[3]{Joel Rybicki\,\orcidlink{0000-0002-6432-6646}}
\affil[1]{University of Freiburg\\
\texttt{\{marc.fuchs,zahra.parsaeian\}@cs.uni-freiburg.de}}
\affil[2]{Lucerne University of Applied Sciences and Arts\\
\texttt{diana.ghinea@hslu.ch}}
\affil[3]{Humboldt University of Berlin\\
\texttt{joel.rybicki@hu-berlin.de}}
\begin{document}
\maketitle

\begin{abstract}
Approximate Agreement ($\approximateagreement$) is a fundamental primitive that, even in the presence of Byzantine faults, allows honest parties to obtain close (but not necessarily identical) outputs that lie within the range of their inputs. While the optimal round complexity of synchronous $\approximateagreement$ on real values is well understood, its extension to other input spaces has remained open, with fundamental questions regarding achievable resilience and round efficiency still unresolved.

In this work, we investigate the optimal round complexity of synchronous $\approximateagreement$ on trees under Byzantine failures. In this setting, parties hold as inputs vertices of a publicly known labeled tree $\tree$ and must output $1$-close vertices lying in the convex hull of the honest inputs. We present a synchronous protocol with round complexity
$O\left(\frac{\log \diameter(\tree)}{\log \log \diameter(\tree)}\right)$,
where $\diameter(\tree)$ denotes the diameter of the input space tree. Our protocol is obtained via a reduction to real-valued $\approximateagreement$, and maintains the resilience guarantees of the underlying real-valued $\approximateagreement$ protocol: $t = c \cdot n$, for any constant $c < 1/3$ in settings with no cryptographic assumptions, and for any constant $c \in [1/3, 1/2)$ assuming digital signatures.
Complementing this result, we extend impossibility results for real-valued $\approximateagreement$ to any graph $\graph$ by proving a lower bound of $
\Omega\left(\frac{\log \diameter(\graph)}{\log \log \diameter(\graph) + \log \frac{n+t}{t}}\right)$ rounds, where $n$ is the number of parties and $t$ the number of Byzantine faults. Together, these results establish the asymptotic optimality of our protocol whenever $t \in \Theta(n)$.

We further extend our techniques to block graphs by leveraging their clique tree structure. This yields protocols for $\approximateagreement$ on block graphs with optimal resilience in both the synchronous and asynchronous models; in the synchronous setting, the resulting protocols achieve asymptotically optimal round complexity. In doing so, we make progress on a broader open problem: beyond trees, the optimal resilience thresholds for $\approximateagreement$ on graph input spaces were largely unresolved, and our results pin down these thresholds for the class of block graphs. For the exact-agreement variant of the problem, the resilience threshold has been shown to be $t < n / \omega$ on any graph with clique number $\omega$, where $t$ denotes the number of Byzantine faults. Our work shows that a similar lower bound does not hold in general for $\approximateagreement$.

Finally, in contrast to prior protocols for $\approximateagreement$ beyond the real line, which typically rely on safe-area constructions and can incur exponential local computation, our approach in the synchronous model reduces the problem to real-valued $\approximateagreement$ and therefore admits efficient local computation.
\end{abstract}

\thispagestyle{empty}

\newpage
\pagenumbering{arabic}

\section{Introduction}

Ensuring consistency among parties in a distributed system is essential, yet it is a difficult task in the face of potential failures or malicious behavior. Agreement protocols serve as indispensable tools for achieving consensus in such environments. One such fundamental primitive is Approximate Agreement ($\approximateagreement$) \cite{JACM:DLPSW86}. The Byzantine variant of this problem considers a setting of $n$ parties in a network, where each party holds a real value as input. Even when $t$ out of the $n$ parties involved are Byzantine (i.e., malicious), $\approximateagreement$ enables the honest parties to obtain \emph{$\varepsilon$-close} values (for any predefined error $\varepsilon > 0$) that lie within the range spanned by the honest inputs. This relaxed form of agreement has proven useful in scenarios where exact consensus is either unnecessary or infeasible, such as clock synchronization \cite{PODC:LenLos22}, blockchain oracles \cite{delphi24}, distributed machine learning \cite{SuVai16,federated20, federated21}, aviation control systems \cite{OPODIS:StolWat15, IEEE:MelWat18}, and robot gathering \cite{NBiS:PoRaTi11}.

The $\approximateagreement$ problem is not limited to real-valued inputs. Several variants have been explored, including multidimensional real inputs \cite{PODC:VaiGar13, STOC:MenHer13, DIST:MHVG15} and discrete domains such as various classes of (connected) graphs \cite{DISC:NoRy19, eprint:ConvexWorld, OPODIS:Liu23, DISC:ACFR19, OPODIS:AttWel24}. These generalizations naturally extend the original requirements: the honest outputs still need to be close and they must be in the \emph{convex hull} of the honest inputs (defined according to the convexity notion associated with the input space). Intuitively, this means that honest outputs must lie ``inside the region” spanned by the honest
inputs; formal definitions will be provided later.

Regardless of the input space considered, the $\approximateagreement$ problem admits solutions that follow a common iteration-based outline.
In each iteration, the parties use a mechanism to distribute their current values to all parties.
Based on the values received, each party computes a \emph{safe area}: this is a set guaranteed to lie within the convex hull of the values distributed by the honest parties. The honest parties' safe areas are not necessarily identical, but they have some overlap.  Each party then computes its new value essentially as a midpoint of the safe area it has obtained. The parties' new values remain within the honest inputs' convex hull, and they become closer with each iteration. This way, after a sufficient number of iterations, $\approximateagreement$ is achieved.

While this outline has been a powerful tool in establishing sufficient conditions for various input spaces, multiple questions remain open: both in terms of resilience and efficiency (e.g., rounds, communication, local computation). Except for the communication complexity of short inputs, $\approximateagreement$ on real values is well understood in the synchronous model \cite{SokDiana, Fekete90, FeketeFixed87, BenDoHo10, PODC:GhLiWa25, MoseArxivNew}. Moreover, while optimal resilience thresholds are known for spaces such as $\mathbb{R}^d$ \cite{DIST:MHVG15} and trees \cite{DISC:NoRy19, eprint:ConvexWorld}, a unified understanding of optimal efficiency tradeoffs, as well as resilience in more general graph classes, remains elusive.

This paper makes a step forward by exploring optimal round complexity for $\approximateagreement$ under Byzantine faults beyond the real line, namely on trees, in the synchronous model. In this variant~\cite{DISC:NoRy19}, the input space is a publicly-known labeled tree $\tree$. 
Each party holds a vertex of $\tree$ as input, and every honest party must output a vertex of $\tree$ such that the following conditions hold:
(i) honest outputs must be within distance one of each other, and (ii) honest outputs must lie in the honest inputs' convex hull (in the smallest subtree of $\tree$ containing all honest input vertices).

\subsection{Our Contribution}
\paragraph{Results for trees.}
We establish tight asymptotic round-complexity bounds for synchronous $\approximateagreement$ on trees under Byzantine faults.
We first show that 
$\Omega\!\left( \frac{\log \diameter(\graph)}{\log \log \diameter(\graph) + \log\frac{n + t}{t}} \right)$ rounds are necessary for $\approximateagreement$ on any graph $\graph$, where $\diameter(\graph)$ denotes the diameter of $\graph$. This result is obtained by adapting Fekete's bound \cite{Fekete90} from $\approximateagreement$ on real numbers. Afterwards, we show that this lower bound is tight for $\approximateagreement$ on a tree 
whenever $t \in \Theta(n)$ by reducing the problem to real-valued $\approximateagreement$: we achieve $\approximateagreement$ on any tree $\tree$ with round complexity of $O\!\left(\frac{\log \diameter(\tree)}{\log\log \diameter(\tree)}\right)$ by using a round-optimal real-valued $\approximateagreement$ protocol as a building block. At the same time, our protocol achieves optimal resilience: depending on the real-valued $\approximateagreement$ protocol used, our protocol is resilient up to $t < n/3$ Byzantine parties without cryptographic assumptions, and up to $t = c \cdot n$ Byzantine parties, for any constant $c \in [1/3, 1/2)$, given a public-key infrastructure and a secure digital signature scheme.

\paragraph{Extension beyond trees: block graphs.} We show that our techniques extend also to a more general class of graphs, namely block graphs. Block graphs admit a clique tree in which adjacent cliques intersect in exactly one vertex, which enables us to reduce $\approximateagreement$ on any connected block graph $\graph$ to $\approximateagreement$ on a tree $\tree$. Such clique tree-based reductions have appeared before in the wait-free setting (i.e., asynchronous shared memory, with up to $t < n$ crashes) with weaker validity conditions \cite{DISC:ACFR19}.
Using our protocol for trees, we get a synchronous protocol
with asymptotically optimal round complexity $O\left(\frac{\log \diameter{(\graph)}}{\log\log \diameter(\graph)} \right)$ whenever $t \in \Theta(n)$. 
The resilience guarantees are inherited from the underlying real-valued $\approximateagreement$ protocol:  $t < n/3$ Byzantine parties without cryptographic assumptions, and up to $t = c \cdot n$ Byzantine parties, where $c \in [1/3, 1/2)$ is a constant, given a public-key infrastructure and a secure digital signature scheme. Moreover, if we instead use the asynchronous protocol for trees of \cite{DISC:NoRy19} instead of our synchronous protocol, we get an asynchronous protocol for $\graph$ with optimal resilience $t < n/3$.

In comparison to prior work, we observe the following:
\begin{itemize}
    \item Our results improve over previously known sufficient conditions for achieving $\approximateagreement$ on block graphs of maximum clique size $\omega$: $t <n / \omega$ in the synchronous model, and $t < n / (\omega + 1)$ in the asynchronous model \cite{DISC:NoRy19, eprint:ConvexWorld}. These conditions were only known to be tight for the exact-agreement variant of the problem, known as Convex Agreement ($\convexagreement$). As $\convexagreement$ and $\approximateagreement$ have matching optimal resilience thresholds for $\mathbb{R}, \mathbb{R}^d$ and trees, it may be natural to assume that the optimal resilience thresholds match for other input spaces as well. Our result for block graphs shows that this is not the case, and leaves the following question open: characterizing the boundary at which $\approximateagreement$ and $\convexagreement$ separate in resilience.
    
    \item Previous protocols achieving $\approximateagreement$ on spaces beyond the real line involve high local computation complexity due to the safe-area approach, which requires each party to intersect up to $\binom{n}{n - t}$ convex sets in each round \cite{STOC:MenHer13, PODC:VaiGar13,DIST:MHVG15, SPAA:GhLiWa23, DISC:NoRy19, eprint:ConvexWorld}. Our synchronous protocol for block graphs avoids this overhead -- without relaxing the requirements on the honest outputs, by reducing the problem to real-valued $\approximateagreement$, the per-round local computation comes down to sorting up to $n$ real values, discarding the outliers (i.e., the lowest $t$ and the highest $t$) and computing the average of the remaining values. It would be interesting to see if the safe area computation can be avoided for other input spaces.
\end{itemize}

\subsection{Related work}
\paragraph{Real values.}
$\approximateagreement$ was first introduced by Dolev et al.~\cite{JACM:DLPSW86}. They investigated the feasibility of achieving $\approximateagreement$ on $\realvalues$, both in the synchronous and the asynchronous model. Concretely, Dolev et al.~\cite{JACM:DLPSW86} proposed a synchronous protocol resilient against $t < n / 3$ corruptions, which is optimal in settings with no cryptographic assumptions, and an asynchronous protocol resilient against $t < n/5$ corruptions. The resilience condition $t < n / 3$ was proven to be optimal in the asynchronous model as well as in \cite{Coan88, OPODIS:AAD04}, and the condition $t < n/2$ was proven to be tight in the synchronous model with cryptographic assumptions \cite{PODC:GhLiWa22, PODC:LenLos22}. 
Real-valued $\approximateagreement$ has also been completely characterized in terms of resilience in the \emph{network-agnostic model} in \cite{PODC:GhLiWa22}: a network-agnostic protocol must tolerate up to $t_s$ corruptions if it runs in a synchronous network, and up to $t_a \leq t_s$ corruptions if it runs in an asynchronous network (without knowing the type of network a priori).

These protocols follow the aforementioned iteration-based outline and, given that the honest inputs are $\diameter$-close, incur round complexity of roughly $O(\log (\diameter / \varepsilon))$. The work of \cite{JACM:DLPSW86} showed that this is optimal for a specific class of protocols that do not leverage information obtained in prior iterations. Fekete \cite{Fekete90,FeketeFixed87} investigated this further and provided lower bounds on \emph{how close the honest values may get} after $R$ rounds, considering various types of failures (crash, omission, Byzantine) in both synchronous and asynchronous models. In the synchronous model with Byzantine failures, \cite{Fekete90} shows that any $R$-round protocol has an execution where the range of honest values is reduced by a (multiplicative) factor of at least $\frac{t^R}{R^R \cdot (n + t)^R}$ -- in contrast to the $2^{-R}$ factor obtained by prior $R$-iteration $\approximateagreement$ protocols. This lower bound is asymptotically tight: \cite{Fekete90} has also provided an asymptotically matching protocol assuming $t < n / 4$, with message complexity $O(n^R)$. Later, \cite{BenDoHo10} improved the resilience threshold to $t < n / 3$ and reduced the message complexity to $O(R \cdot n^3)$. In addition, with minor adjustments, the \emph{Proxcensus} protocol of \cite{EUROCRYPT:GhGoLi22} becomes an $\approximateagreement$ protocol that matches Fekete's lower bound for $t = c \cdot n$ corruptions, for any constant $c < 1/2$, assuming digital signatures. For the asynchronous model, the lower bound of \cite{FeketeFixed87} holds for protocols in canonical-round form -- as discussed in \cite{ARXIV:AtFlWe25}, this does not cover protocols such as \cite{OPODIS:AAD04, Coan88}.

Optimizations regarding communication complexity have also been a topic of interest.
The work of~\cite{MoseArxivNew} achieves optimal-resilience asynchronous $\approximateagreement$ with $O(n^2)$ messages per iteration, in contrast to the $O(n^3)$ messages required by earlier solutions~\cite{OPODIS:AAD04}.
In addition,~\cite{PODC:GhLiWa24, PODC:GhLiWa25} show that $O(\ell n)$ bits of communication are sufficient to achieve a stronger variant of $\convexagreement$ on $\ell$-bit integer inputs, provided that $\ell$ is sufficiently large, and \cite{SokDiana} extended this result to real-valued $\approximateagreement$.

\paragraph{Trees and graphs.} 
The problem of approximately agreeing on vertices in a graph was first considered by Alcántara et al.~\cite{DISC:ACFR19} and Nowak and Rybicki~\cite{DISC:NoRy19}.
Alcántara et al. \cite{DISC:ACFR19} introduced two variants: \emph{Edge-Gathering}, where honest parties output vertices that are either identical or adjacent, and \emph{$\mathit{1}$-Gathering}, where the honest outputs form a clique. When the input space is a tree, these two variants are equivalent. These variants were presented in the wait-free model, framed as relaxations of the \emph{robot gathering} problem \cite{GatheringRing, GatheringBipartite}. It is important to note that these formulations differ from the classical $\approximateagreement$ requirements in that they do not require the outputs to lie within the convex hull of honest inputs (unless the honest inputs already lie on the same edge in the case of Edge-Gathering, or on a clique in the case of $1$-Gathering).

Nowak and Rybicki~\cite{DISC:NoRy19} generalized the $\approximateagreement$ problem to arbitrary convexity spaces in a fully-connected network with Byzantine failures. They extended the problem to settings such as trees and chordal graphs, and proposed protocols that operate in the asynchronous model. In addition, their work provides a characterization of the exact agreement version of $\approximateagreement$ on arbitrary convex spaces in the synchronous model. These problems were later explored in the network-agnostic model by~\cite{eprint:ConvexWorld}.
Regarding round complexity, the asynchronous $\approximateagreement$ protocol of~\cite{DISC:NoRy19} for trees achieves round complexity $O(\log \diameter)$, where $\diameter$ is the diameter of the input space tree. While further work focused on improving the message complexity~\cite{MoseArxivNew}, the round complexity of $O(\log \diameter)$ remains the state of the art in the asynchronous model, as well as in the synchronous model (up to our work).
For chordal graphs, \cite{DISC:NoRy19} proposed two protocols: one presented as a protocol for cycle-free semilattices (i.e., a particular case of chordal graphs) with round complexity $O(\diameter)$, and a more efficient protocol with $O(\log \diameter)$ rounds, where $\diameter$ denotes the input space diameter of the graph. However, the latter protocol used a reduction to tree decompositions of the chordal graph~\cite{DISC:NoRy19} and was shown to be incorrect by~\cite{eprint:ConvexWorld}, which provided a network-agnostic protocol with $O(\diameter)$ rounds. Whether $o(\diameter)$ rounds are, in fact, sufficient for chordal graphs has remained an open problem; our work resolves this in the affirmative for the subclass of block graphs.

A central question regarding $\approximateagreement$ on graphs is characterizing the classes of graphs that admit $\approximateagreement$ protocols. While the protocols of~\cite{DISC:NoRy19,eprint:ConvexWorld} assume that the  graph is chordal, multiple works have explored structural properties required for $\approximateagreement$ on graphs in the wait-free~model~\cite{DISC:ACFR19, SIROCCO:Alistarh21, PODC:Ledent21, OPODIS:Liu23}.

\section{Preliminaries}

We describe the model and key concepts and definitions.

\paragraph{Model and Adversary.}
We consider $n$ parties $\party_1, \party_2, \dots, \party_n$ running a protocol in a fully-connected network where links model authenticated channels. We assume that the network is synchronous: the parties' clocks are synchronized and every message is delivered within a publicly known amount of time $\Delta$.
Our protocols assume a (computationally unbounded) adversary that can permanently corrupt up to $t < n / 3$ parties. The adversary is adaptive -- it may choose which parties to corrupt at any point in the protocol's execution. Corrupted parties become Byzantine: they may deviate arbitrarily (maliciously) from the protocol. Our lower bound, however, holds even against a static adversary that has to choose which parties to corrupt at the beginning of the protocol's execution.

\paragraph{Approximate Agreement on $\realvalues$.}
We recall the definition of $\approximateagreement$ on real values, as presented in~\cite{JACM:DLPSW86}.

\begin{definition} \label{def:aa-real}
Let $\Pi(\varepsilon)$ be an $n$-party protocol in which every party holds a value in $\realvalues$ as input and $\varepsilon$ is a publicly known parameter.
We say that $\Pi(\varepsilon)$ achieves $\approximateagreement$ if the following properties hold for any predefined $\varepsilon > 0$ even when up to $t$ of the $n$ parties are corrupted:
\textbf{(Termination)} Every honest party produces an output in $\realvalues$ and terminates; \textbf{(Validity)} Honest parties' outputs lie within the range of the honest inputs; \textbf{(\boldmath$\varepsilon$-Agreement)} If two honest parties output $v$ and $v'$, then $\abs{v - v'} \leq \varepsilon$.
\end{definition}

We use the term \emph{valid value} to refer to a value within the range of the honest inputs.

\paragraph{Notations for graphs.} 
For a graph $\graph=(V,E)$, we use $\vertices(\graph)$ to denote its set of vertices. We denote the shortest path between two vertices $u, v \in \vertices(\graph)$ by $\pathh(u, v)$: as we are concerned with trees and block graphs, these are unique.
We define the distance $\distance(u, v)$ as the length of $\pathh(u, v)$.
We use $\diameter(\graph)$ to denote the diameter of $\graph$, i.e., the maximum distance between any two vertices in $\vertices(\graph)$.
The \emph{(shortest path) convex hull} $\hull{S}$ of a set of vertices $S \subseteq \vertices(\graph)$ is the union of all shortest paths between any pair of vertices in $S$. Note that for trees, this is equivalent to the set of vertices in the smallest connected subtree containing $S$.

\paragraph{$\approximateagreement$ on a graph $\graph$.}
We now define the problem of $\approximateagreement$ on graph $\graph$ ~\cite{DISC:NoRy19}.
\begin{definition} \label{def:aa-block-graphs}
Let $\graph$ be a fixed labeled graph known to all parties and $\Pi(\graph)$ be an $n$-party protocol in which every party holds a vertex of $\graph$ as input.
We say that $\Pi(\graph)$ achieves $\approximateagreement$ on $\graph$ if the following properties hold even when up to $t$ of the $n$ parties are corrupted: \textbf{(Termination)} Every honest party produces an output in $\vertices(\graph)$ and terminates; \textbf{(Validity)} Honest parties' outputs lie within the honest inputs' convex hull; \textbf{(1-Agreement)} If two honest parties output $v$ and $v'$, then~$\distance(v, v') \leq 1$.
\end{definition}

As in the case of real values, we refer to a vertex in the honest inputs' convex hull as a \emph{valid vertex}. If the input space graph $\graph$ has diameter $\diameter(\graph) \le 1$ (i.e., it is a single clique), the $\approximateagreement$ problem becomes trivial: each party can simply return its own input. Therefore, in the remainder of this work, we focus on graphs with diameter $\diameter(\graph)>1$.
We add that, except for the case where $\graph$ is a tree, this does not apply to the \emph{Edge-Agreement} variant of \cite{DISC:ACFR19}, where the honest parties' outputs need to be on the same edge.

\paragraph{Round complexity.} 
In a synchronous communication round, every party may send messages to the other parties, receive all messages sent to it during that round, and perform local computations. The round complexity of a protocol is the maximum number of such rounds until every honest party terminates, over all admissible executions.

\section{Lower Bound}\label{section:lower-bound}

In this section, we analyze the round complexity required to achieve $\approximateagreement$ on a graph. We begin by revisiting Fekete's lower bound for real-valued $\approximateagreement$ regarding \emph{how close the honest values may get} after a fixed number of rounds. Fekete~\cite{Fekete90} used chain arguments to prove lower bounds for the convergence rate of real-valued $\approximateagreement$ protocols. However, upon closer inspection, his proof only relies on a weaker validity condition than \Cref{def:aa-real} called \emph{Strong Unanimity}: if all honest parties hold input $v$, no honest party outputs $v' \neq v$.

Fekete constructed for the $R$-round full-information protocol a chain of 
$s \leq \frac{(n + t)^R}{\roundT_1 \cdot \roundT_2 \cdot \cdots \cdot \roundT_R}$ views, 
where $\sum_{i=1}^R \roundT_i \le t$, and in the first view, all honest parties start with input $a$, and in the last view, all honest parties start with input $b$. For these views, $\roundT_i$ roughly represents the number of Byzantine parties that deviate from the protocol for the first time in round $i$. For any pair of views that are consecutive in the chain, there exists an execution of the full-information protocol in which two honest parties obtain those views. 

For any protocol satisfying strong unanimity, the first view leads to all honest parties outputting~$a$ and the last view leads to all honest parties outputting~$b$. In the case of real values, this implies the existence of two consecutive views in the chain that yield two honest outputs $v, v' \in \realvalues$ with $\abs{v - v'} \geq \abs{a-b} / s$. However, the proof works as-is on any set $V$ of values with a distance metric~$\distance$.
Following the exact same steps as in Fekete's proof~\cite[Theorem 15]{Fekete90}, 
we get the next lemma.

\begin{lemma}\label{corollary:fekete-bound-alt}
Let $\Pi$ be any deterministic $R$-round protocol with inputs and outputs from the set $V$ that satisfies strong unanimity even when up to $t$ of the $n$ parties involved are Byzantine.
Then for any $a,b \in V$, there exists an execution of $\Pi$ in which two honest parties  output $v$ and $v'$ such that $
 \distance(v, v') \geq \distance(a,b) \cdot \left( \frac{t}{R(n+t)} \right)^R.$
\end{lemma}

The lower bound is an immediate corollary of the above lemma. By Fekete's results~\cite[Theorem 18]{Fekete90}, the same asymptotic lower bound holds for crash faults by replacing $t/(n+t)$ with $t/(2n+3t)$. 

\begin{theorem}
Let $n>t>0$ be fixed. Any deterministic $n$-party protocol that achieves $\approximateagreement$ on a graph $G$ even when up to $t$ of the $n$ parties are Byzantine requires $\Omega\left( \frac{\log \diameter(G)}{ \log \log \diameter(G) + \log \frac{n+t}{t}} \right)$ rounds to terminate. \label{thm:lower-bound}
\end{theorem}
\begin{proof}
Let $\delta := (n+t)/t$. 
Choose $a,b \in V$ such that $\distance(a,b) = \diameter(G)$. 
Then by 1-Agreement and \Cref{corollary:fekete-bound-alt}, any honest parties' output values $v$ and $v'$ satisfy 
\[
1 \ge \distance(v,v') \ge \diameter(G) \cdot \left( \frac{t}{R(n+t)} \right)^R =  \diameter(G) \cdot \left(\frac{1}{R\delta}\right)^R.
\]
By rearranging the above, the claim now follows for any fixed $\delta$, as   
\begin{align*}
\left( R\delta \right)^R \ge \diameter(G) &\iff R \log (R\delta) \ge \log \diameter(G) \\
&\iff R \ge \Omega\left( \frac{\log \diameter(G)}{\log \log \diameter(G) + \log \delta}\right).
\end{align*}
\end{proof}

\section{Round-Optimal Protocol for Trees}

In this section, we give our round-optimal $\approximateagreement$ protocol for labeled trees, asymptotically matching the lower bound of \Cref{section:lower-bound}. We begin with the special case where the input space is a path, where the problem reduces cleanly to real-valued $\approximateagreement$. We then lift this warm-up to general trees by projecting inputs onto suitably chosen paths.

\subsection{Warm-up: Protocol for Paths}\label{section:warmup}
Building towards our solution, we begin by describing a warm-up protocol for the case where the input space is a labeled path~$\pathh$.
We make use of the protocol of~\cite{BenDoHo10}, denoted by $\realAA$, which achieves $\approximateagreement$ on $\realvalues$ with asymptotically optimal round complexity, described by the theorem below.
We add that the round complexity analysis presented in \cite{BenDoHo10} assumes $\varepsilon = 1/n$. We extend their analysis to any $\varepsilon > 0$. The technical details of the proof are included in \Cref{appendix:realvalues-aa}.
\begin{restatable}{theorem}{RealValuesAA}\label{theorem:real-values-aa}
    There is a protocol $\realAA(\varepsilon)$ achieving $\approximateagreement$ on real values even when up to $t < n / 3$ of the $n$ parties involved are Byzantine. If the honest inputs are $\diameter$-close, $\realAA(\varepsilon)$ ensures Termination within $R_{\realAA}(\diameter, \varepsilon) < 7 \cdot \frac{\log_2 (\diameter/\varepsilon)}{\log_2 \log_2 (\diameter/\varepsilon)} + 3$ rounds.
\end{restatable}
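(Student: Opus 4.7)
The plan is to invoke the convergence analysis that \cite{BenDoHo10} already carries out for their protocol, and then redo only the round-count bookkeeping so that it accommodates an arbitrary accuracy parameter $\varepsilon>0$, rather than the specific choice $\varepsilon = 1/n$ they analyze. In particular, I would take $\realAA(\varepsilon)$ to be exactly the protocol of \cite{BenDoHo10} with the termination threshold adjusted from $1/n$ to $\varepsilon$; the only substantive thing to prove is the round bound.

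The first step is to recall the central convergence statement inherited from \cite{BenDoHo10}. After $R$ iterations (each consisting of a constant number of communication rounds in which honest parties distribute values and discard the $t$ lowest and $t$ highest, exploiting the equivocation-detection mechanism), validity is preserved and the diameter of the honest parties' current values is at most
\[
(b-a) \cdot \frac{t^R}{R^R \cdot (n-2t)^R}.
\]
This is exactly the Fekete-matching bound summarized in \Cref{section:warmup}: in iteration $i$, a factor of $\tfrac{t_i}{n-2t}$ appears, where $t_i$ counts freshly detected byzantine equivocators; because $\sum_i t_i \le t$, the worst-case product across $R$ iterations is maximized by $t_i = t/R$ via AM--GM. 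This derivation is independent of $\varepsilon$, so it transports verbatim to our setting.

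Next, I would choose $R$ so that the displayed bound is at most $\varepsilon$. Setting $X := (b-a)/\varepsilon$, this is equivalent to
\[
R^R \cdot \left(\frac{n-2t}{t}\right)^R \;\ge\; X.
\]
The assumption $t < n/3$ gives $\frac{n-2t}{t} > 1$, so it suffices to enforce $R^R \ge X$, i.e. $R \log_2 R \ge \log_2 X$. I would then verify that $R_\star := \lceil 7 \log_2 X / \log_2 \log_2 X \rceil$ satisfies this inequality: for large $X$, $\log_2 R_\star = \log_2 \log_2 X - \log_2 \log_2 \log_2 X + O(1) = (1-o(1))\log_2 \log_2 X$, whence $R_\star \log_2 R_\star \ge 7(1-o(1))\log_2 X \gg \log_2 X$, with the generous constant $7$ precisely there to absorb lower-order terms, the $O(1)$ rounds-per-iteration overhead, and the ceiling.

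The main obstacle I anticipate is confirming the constant $7$ uniformly in $X$, rather than only asymptotically. Concretely, I would (i) handle the degenerate regime where $X$ is so small that $\log_2 \log_2 X \le 0$ separately by noting that in that case any valid output already satisfies $\varepsilon$-agreement, so a constant number of rounds trivially suffices; (ii) check the intermediate regime by monotonicity of $R \log_2 R$ and the slack provided by $7$; and (iii) note that once the round count is established, Termination, Validity, and $\varepsilon$-Agreement for $\realAA(\varepsilon)$ follow directly from the corresponding guarantees of \cite{BenDoHo10}.
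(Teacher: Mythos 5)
Your proposal is correct and follows essentially the same route as the paper: invoke the convergence bound $(b-a)\cdot\frac{t^R}{R^R(n-2t)^R}$ from \cite{BenDoHo10} (their Claim 12, plus Claim 8 for validity), drop the factor $\bigl(\tfrac{t}{n-2t}\bigr)^R<1$ using $t<n/3$, and solve $R^R\ge (b-a)/\varepsilon$ by taking $R$ of order $\tfrac{\log_2\delta}{\log_2\log_2\delta}$ iterations at $3$ rounds each. The paper just makes your ``absorb lower-order terms'' step explicit, taking $R\ge\tfrac{20}{9}\cdot\tfrac{\log_2\delta}{\log_2\log_2\delta}$ iterations and using the uniform bound $\tfrac{\log_2\log_2\log_2 x}{\log_2\log_2 x}<\tfrac{11}{20}$, so that $3\cdot\tfrac{20}{9}<7$ yields the stated round count.
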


To achieve $\approximateagreement$ on a path $\pathh$, the parties denote the $k := \diameter(\pathh)$ vertices in the path $\pathh$ by $(v_1, v_2, \ldots, v_k)$ (where $v_i$, $v_{i + 1}$ are adjacent, and $v_1$ is the endpoint of $\pathh$ with the lowest label in lexicographic order).
A party with input $v_i$ joins 
$\realAA(1)$ with input $i$. Each party obtains a real value $j$ from $\realAA(1)$ and outputs the vertex 
$v_{\closestInt{j}}$, where $\closestInt{j}$ denotes the nearest integer to $j$: if $z \leq j < z + 1$ for $z \in \integers$, $\closestInt{j} := z$ if $j - z < (z + 1) - j$ and $\closestInt{j} := z + 1$ otherwise. The following properties are~easy~to~show.

\begin{remark}\label{remark:combined} \hfill
\begin{enumerate}
 \item If $i_{\min}, i_{\max} \in \integers$ and $i_{\min} \leq j \leq i_{\max}$, then $i_{\min} \leq \closestInt{j} \leq i_{\max}$.
 \item  If $j, j' \in \realvalues$ satisfy $\abs{j - j'} \leq 1$, then $\abs{\closestInt{j} - \closestInt{j'}} \leq 1$. 
\end{enumerate}
\end{remark}
The first part of \Cref{remark:combined}
follows directly from
the definition of $\closestInt{\cdot}$ and ensures that the vertices $v_{\closestInt{j}}$ obtained are valid. For $1$-Agreement, $\realAA(1)$ provides the honest parties with $1$-close real values $j$.
The second part of \Cref{remark:combined}
ensures that the honest parties hold $1$-close integer values  $\closestInt{j}$. Hence, the honest parties' output vertices $v_{\closestInt{j}}$ are $1$-close, and $1$-Agreement holds. Consequently, we have achieved $\approximateagreement$ on the path $\pathh$ within $O\left( \frac{\log \diameter(\pathh)}{\log \log \diameter(\pathh)} \right)$ rounds by \Cref{theorem:real-values-aa}.

\subsection{Moving Towards Trees} \label{section:path-assumed}
As a step towards our protocol for trees, we assume that the parties \emph{know} a path $\pathh$ in the input space tree $\tree$ that intersects the honest inputs' convex hull.
Then, the parties may proceed as follows: each party with input $v_{\inputt} \in \vertices(\tree)$ computes the projection of the vertex $v_{\inputt}$ onto the path $\pathh$, denoted by $\projection_{\pathh}(v_{\inputt})$. This is the vertex in $\pathh$ that has the shortest distance to $v_{\inputt}$, as shown in Figure \ref{figure:path-trick}. 
Note that the honest parties' projections $\projection_{\pathh}(v_{\inputt})$ are in the honest inputs' convex hull, as described by the lemma below. The proof is included in \Cref{appendix:realvalues-remarks}.
\begin{restatable}{lemma}{ObviousHullMaintained}\label{remark:convex-hull-maintained}
    Consider a set of vertices $S \subseteq \vertices(\tree)$ and a path $\pathh$ in $\tree$ such that $\vertices(\pathh) \cap \hull{S} \neq \emptyset$. Then, for any $v \in S$, $\projection_{\pathh}(v) \in \vertices(\pathh) \cap \hull{S}$.
\end{restatable}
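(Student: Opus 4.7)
My plan is to exploit the standard tree-projection fact that in a tree, the unique path from any vertex $v$ to any vertex $u$ of a subtree (in particular, a path) $P$ must pass through the projection $\projection_P(v)$. Given this, the lemma reduces to combining two observations: the projection lies on \emph{some} tree-path from $v$ to a known vertex of $\hull{S}$, and convex hulls in trees are themselves subtrees, closed under taking paths between their members.

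First, I would set $w := \projection_{\pathh}(v)$, which by definition is a vertex of $\pathh$, so $w \in \vertices(\pathh)$ is free. Second, since $\vertices(\pathh) \cap \hull{S} \neq \emptyset$ by assumption, I can pick some $u \in \vertices(\pathh) \cap \hull{S}$. The heart of the argument is then to show that $w$ lies on the (unique) path $\pathh(v, u)$ in $\tree$. I would prove this by contradiction or by a direct shortest-path argument: if $w$ were not on $\pathh(v, u)$, then concatenating $\pathh(v,u)$ with the sub-path of $\pathh$ from $u$ to $w$ would yield a walk from $v$ to $w$ in $\tree$ that avoids some edge of $\pathh(v,w)$, which contradicts the uniqueness of paths in a tree (or, equivalently, contradicts $w$ being the closest vertex of $\pathh$ to $v$, since $u$ would then be strictly closer via a shortcut).

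Finally, I would close by noting that $v \in S \subseteq \hull{S}$ and $u \in \hull{S}$, and that $\hull{S}$ is by definition the vertex set of a subtree of $\tree$. Subtrees are closed under taking internal vertices of paths between their members, so every vertex of $\pathh(v, u)$ lies in $\hull{S}$; in particular $w \in \hull{S}$. Combined with $w \in \vertices(\pathh)$, this yields $w \in \vertices(\pathh) \cap \hull{S}$, as required.

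\textbf{Main obstacle.} The only non-cosmetic step is the path-through-projection claim. It is a classical fact but must be spelled out carefully using the uniqueness of paths in a tree to avoid circularity with the definition of $\projection_{\pathh}$. Once this is pinned down, the rest is purely definitional.
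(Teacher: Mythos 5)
Your plan is correct and follows essentially the same route as the paper: both fix a vertex $u \in \vertices(\pathh) \cap \hull{S}$, show that $w := \projection_\pathh(v)$ lies on the unique tree path from $v$ to $u$, and then use the fact that $\hull{S}$ is the vertex set of a subtree (hence contains every vertex of $\pathh(v,u)$, in particular $w$). The only caution is that your justification of the middle step --- the concatenated walk ``avoids some edge of $\pathh(v,w)$'' --- needs the explicit observation that the neighbor of $w$ on $\pathh(v,w)$ cannot itself lie on $\pathh$ (otherwise it would be a vertex of $\pathh$ strictly closer to $v$ than $w$), which is what forces the sub-path of $\pathh$ from $u$ into $w$ to use a different edge and hence makes the walk bypass the last edge of $\pathh(v,w)$; with that spelled out, your argument is the paper's common-prefix analysis in a slightly different guise.
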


\begin{figure}[h]
\centering
\vspace{-2em}
\includegraphics[width=0.65\textwidth]{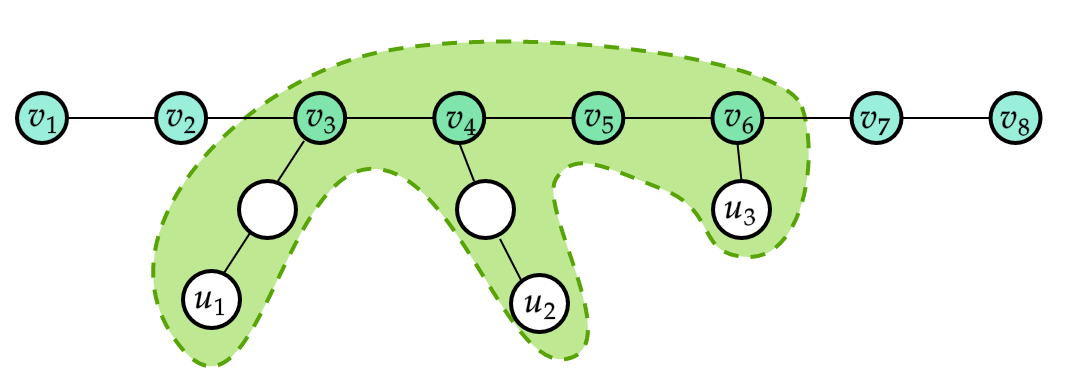}
\caption{Let $\pathh$ be the assumed path, represented by the sequence of vertices $v_1, v_2, \dots, v_8$. The vertices $u_1, u_2, u_3$ correspond to the honest inputs, whose convex hull is highlighted in green. The projections of $u_1, u_2, u_3$ onto path $\pathh$ are vertices $v_3, v_4, v_6$, respectively. 
}\label{figure:path-trick}
\end{figure}
Then, the parties follow the approach described in \Cref{section:warmup} to reach $\approximateagreement$ on the path $\pathh$, using the vertices $\projection_{\pathh}(v_{\inputt})$ as inputs. This provides the parties with vertices that are $1$-close in $\tree$, and by \Cref{remark:convex-hull-maintained}, valid in $\tree$.  Consequently, $\approximateagreement$ is achieved within $O \left( \frac{\log \diameter(\tree)}{\log \log \diameter(\tree)} \right)$ rounds. 

\subsection{Weakening the Path Assumption: Initial Attempt}\label{section:brief-announcement-solution}

In Section \ref{section:path-assumed}, we have discussed a round-optimal approach for $\approximateagreement$ on trees given that the honest parties \emph{know} a path in $\tree$ that intersects the honest inputs' convex hull. This section makes a first step towards removing this assumption, leading to a solution with round complexity $O \left( \frac{\log \abs{\vertices(\tree)}}{\log \log \abs{\vertices(\tree)}} \right)$. This is the protocol we have described in \cite{ourBA}, and represents a stepping stone towards our final solution for trees, presented in \Cref{section:fox-paths}.

Intuitively, it may seem that finding a path $\pathh$ that passes through the honest inputs' convex hull comes down to solving \emph{Byzantine Agreement}. This would require $t + 1 = O(n)$ communication rounds \cite{DolStr83}, which generally prevents us from achieving our round complexity goal.
Instead, we implement a subprotocol $\pathfinder$ that enables the honest parties to \emph{approximately} agree on such a path. Concretely, each honest party obtains a subpath $\pathh$ of $\tree$ such that:
(i) $\pathh$ intersects the convex hull of the honest inputs; and
(ii) if two honest parties obtain different paths $\pathh$ and $\pathh'$, then either $\pathh$ extends $\pathh'$ by one edge, or $\pathh'$ extends $\pathh$ by one edge.
Formally, if $\pathh = (v_1, \ldots, v_k)$ and $\pathh' = (u_1, \ldots, u_{k'})$, then either $\pathh = \pathh'$, or $\pathh' = \pathh \oplus (v_k, u_{k'})$, or $\pathh = \pathh' \oplus (u_{k'}, v_k)$.
In Section~\ref{section:final-protocol}, we show that this suffices to apply the approach from Section~\ref{section:path-assumed}.

\subsubsection{Approximately Agreeing on Paths}\label{section:path}

To obtain these paths, we assume that the input space tree $\tree$ is rooted, and we denote its root by $\roott$ (this will be the vertex in $\tree$ with the lowest label in lexicographic order). We then enable the honest parties to achieve $1$-Agreement on vertices within a subtree rooted at a valid vertex. Once an honest party obtains such a vertex $v$, it is guaranteed that the path $\pathh(\roott, v)$ intersects the convex hull of the honest inputs. Moreover, the $1$-Agreement property ensures that the honest parties' paths $\pathh(\roott, v)$ are identical except possibly for one additional edge.

\paragraph{List representation.} Our approach for enabling honest parties to identify vertices within a subtree rooted at a valid vertex is based on transforming the (now rooted) tree~$\tree$ into a list~$L$ with certain key properties. This transformation relies on a technique commonly used for efficiently computing \emph{lowest common ancestors} in rooted trees~\cite{LCATrick}.
Each party performs a depth-first search ($\dfs$) starting from the fixed root vertex~$\roott$ and records each vertex upon every visit during the $\dfs$ traversal.
We present the code for this step below. Afterwards, we describe the list obtained in this step for the tree depicted in Figure \ref{figure:small-tree}. 
\begin{dianabox}{$\listconstruction(\tree, \roott)$}
\algoHeadNoBold{$\dfs(v)$, where $v \in \vertices(\tree)$, $L$ is a global list:}
	\begin{algorithmic}[1]
            \State Append $v$ to $L$.
            \State For each neighbor $v' \notin L$ of $v$ (ordered by labels):
            \State \hspace{0.5cm} Run $\dfs(v')$. Afterwards, append $v$ to $L$.
    \end{algorithmic}
    \vspace{0.5cm}
    
	\algoHeadNoBold{Code for party $\party$, given tree $\tree$ and a vertex $\roott \in \vertices(\tree)$:}
	\begin{algorithmic}[1]
            \State $L := [\ ]$. Run $\dfs(\roott)$. Afterwards, \Return $L$.
    \end{algorithmic}
\end{dianabox}

\begin{wrapfigure}[7]{r}{5.2cm}
\centering
\vspace{-0.8cm}
\includegraphics[width=4.77cm]{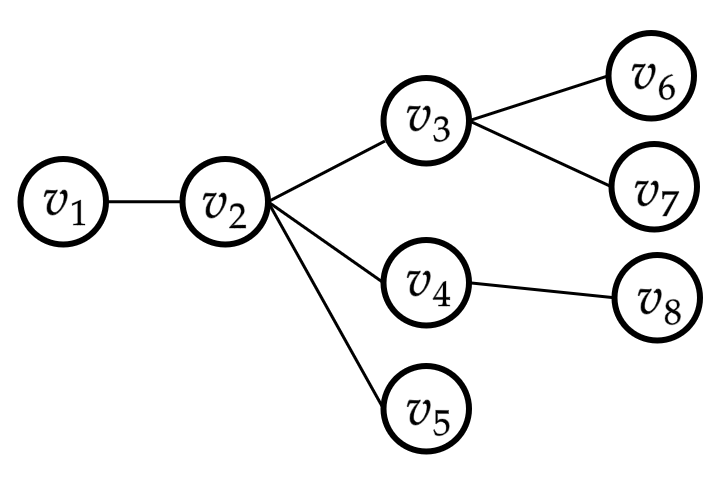}
\vspace{-0.47cm}
\caption{An input space tree.}\label{figure:small-tree}
\end{wrapfigure} 

\Cref{figure:small-tree} illustrates an example of an input space tree. If $\roott := v_1$, every party begins the $\dfs$ from $v_1$. The traversal then visits $v_2$, proceeds to $v_3$, then $v_6$, returns to $v_3$, continues to $v_7$, returns again to $v_3$, then to $v_2$, and so on. The final list obtained from this traversal is $
L = [v_1, v_2, v_3, v_6, v_3, v_7, v_3, v_2, v_4, v_8, v_4, v_2, v_5, v_2, v_1].$

To describe the properties of the resulting list, we first establish a few notations. We use $L_i$ (for $1 \leq i \leq \abs{L}$) to denote the $i$-th element of list~$L$. In addition, for a vertex $v \in \vertices(\tree)$, we let $L(v)$ denote the set of indices $i$ such that $L_i = v$.

Lemma~\ref{lemma:list-construction} formally establishes the guarantees of the list constructed via this approach. The proof is provided in Appendix~\ref{appendix:trees}.

\begin{restatable}{lemma}{ListConstruction} \label{lemma:list-construction}
    Consider a rooted tree $\tree$, and let $\roott$ denote its root.
    
    Then, $\listConstruction(\tree, \roott)$ returns, in finite time, a list of vertices $L$ with the following properties:
    \begin{enumerate}[nosep]
        \item If $\abs{\vertices(\tree)} > 1$, then for any $i < \abs{L}$, the vertices $L_i$ and $L_{i+1}$ are adjacent in $\tree$.
        \item The list $L$ contains $\abs{L} \leq 2 \cdot \abs{\vertices(\tree)}$ elements, and, for every vertex $v \in \vertices(\tree)$, we have $L(v) \neq \emptyset$. 
        \item Consider a vertex $v \in \vertices(\tree)$, and let $i_{\min} = \min L(v)$ and $i_{\max} = \max L(v)$. Then, a vertex $u$ is in the subtree rooted at $v$ if and only if $L(u) \subseteq [i_{\min}, i_{\max}]$.
        \item For any two vertices $v, v' \in \vertices(\tree)$ and any $i \in L(v)$ and $i' \in L(v')$, the lowest common ancestor of $v$ and $v'$ is in the set $\{L_k : \min(i, i') \leq k \leq \max(i, i')\}$.
    \end{enumerate}
\end{restatable}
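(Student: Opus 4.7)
The proof is a standard analysis of the depth-first-search / Euler tour on a tree, and I would organize it by establishing a single structural fact about the recursive call $\dfs(v)$ and then deducing the four properties from it. The key invariant to prove by induction on the size of the subtree $\tree_v$ rooted at $v$ is the following: a call to $\dfs(v)$ terminates, appends exactly $2 \cdot |\vertices(\tree_v)| - 1$ elements to $L$, starts by appending $v$, ends by appending $v$, and during its execution appends only vertices from $\vertices(\tree_v)$, with each such vertex appearing at least once. The induction step is immediate: when $v$ has children $w_1, \ldots, w_d$ (ordered by label), the call first appends $v$, then for each $w_j$ calls $\dfs(w_j)$ (which by induction contributes $2|\vertices(\tree_{w_j})| - 1$ appends, all from $\vertices(\tree_{w_j})$) and appends $v$ afterward, giving a total of $1 + \sum_j (2|\vertices(\tree_{w_j})| - 1) + d = 2|\vertices(\tree_v)| - 1$ appends.

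With this invariant in hand, the four properties follow cleanly. For termination and Property 2, apply the invariant to the top-level call $\dfs(\roott)$: it terminates, writes $2|\vertices(\tree)| - 1 \leq 2|\vertices(\tree)|$ elements, and since every vertex of $\tree$ belongs to the subtree rooted at $\roott$, every vertex appears at least once, so $L(v) \neq \emptyset$ for all $v$. For Property 1, observe that every append is either the very first append of a fresh $\dfs$-call (in which case the previous append, if any, was either the parent of $v$ just before the recursive call, or a sibling's subtree terminating at that sibling's root, which is adjacent to $v$'s parent and then we append $v$'s parent\ldots let me rephrase): every append of a vertex $u \neq \roott$ that is not the first element of $L$ is either performed right after calling $\dfs$ on a child, in which case the previous element is that child (adjacent to $u$), or it is the initial append inside $\dfs(u)$, in which case the previous element was the parent of $u$ just before the recursive call fired (again adjacent). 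This exhausts all cases.

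For Property 3, fix $v$ and apply the invariant to $\dfs(v)$: all indices $i$ with $L_i \in \vertices(\tree_v)$ form a contiguous block $[i_{\min}, i_{\max}]$ corresponding exactly to the lifetime of $\dfs(v)$, since outside that window we are executing $\dfs$ of an ancestor or sibling-subtree and only append vertices disjoint from $\vertices(\tree_v)$. Hence $L(u) \subseteq [i_{\min}, i_{\max}]$ iff every occurrence of $u$ is during $\dfs(v)$ iff $u \in \vertices(\tree_v)$. For Property 4, let $w$ be the LCA of $v$ and $v'$ and let $i \in L(v), i' \in L(v')$ with (say) $i \leq i'$. By Property 3 applied to $w$, both $i$ and $i'$ lie in $[\min L(w), \max L(w)]$, so it suffices to show some occurrence of $w$ lies in $[i, i']$. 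Suppose not; then by Property 3 applied along the ancestor chain, the indices $i$ and $i'$ would both lie inside the $\dfs$-window of a proper descendant of $w$, contradicting $w$ being the least common ancestor (since $v$ lies in the subtree of one child of $w$ and $v'$ in a different child's subtree, and the $\dfs$-windows of sibling subtrees are separated by an occurrence of $w$).

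The main obstacle is phrasing the subtree-contiguity argument in Property 3 cleanly, since it is the crux on which Property 4 also rests; every other property is almost a direct bookkeeping consequence of the inductive invariant on $\dfs(v)$.
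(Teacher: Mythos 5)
Your proof is correct and its underlying idea---analyzing the DFS traversal (Euler tour)---is the same as the paper's, but your organization differs in a useful way. The paper proves each property with a separate, rather informal DFS argument: Property~2 by induction on leaf removal, Property~3 by appealing informally to ``first/last occurrence of $v$ encloses its descendants,'' and Property~4 by a direct contradiction about DFS not leaving a subtree prematurely. You instead establish a single inductive invariant about the recursive call $\dfs(v)$ (terminates; appends exactly $2|\vertices(\tree_v)|-1$ elements, all from $\vertices(\tree_v)$, each at least once; begins and ends with $v$), and then derive all four properties from it. This invariant makes the subtree-contiguity fact---the indices written during $\dfs(v)$ form a contiguous block and are exactly those $i$ with $L_i \in \vertices(\tree_v)$---fully explicit, which is precisely the point the paper glosses over in its treatment of Properties~3 and~4; your deduction of Property~4 via the observation that the occurrences of $w$ separate the windows of $w$'s distinct children's subtrees is a cleaner version of the paper's contradiction argument. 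You also obtain the sharper equality $|L| = 2|\vertices(\tree)| - 1$ (the lemma only needs $\leq 2|\vertices(\tree)|$). One small expository wrinkle: in your Property~1 case analysis, the restriction ``$u \neq \roott$'' is placed so that it appears to exclude the non-initial appends of the root; those are in fact covered by your first case (append right after a child's $\dfs$ returns), so the argument is sound, but the statement would read more cleanly if the caveat $u \neq \roott$ were attached only to the second case (initial append inside $\dfs(u)$).
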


Note that the $\listConstruction$ algorithm is deterministic, and therefore enables the honest parties to (simultaneously) obtain the same list $L$.

\paragraph{Finding vertices in the subtree of a valid vertex.}
After computing the list $L$, each party $\party$ with input vertex $v_{\inputt}$ joins $\realAA(1)$ with input $i \in L(v_{\inputt})$. 
We add that $L(v_{\inputt})$ may contain multiple indices, and the parties may choose any of them; without loss of generality, we let $i := \min\{ L(v_{\inputt})\}$. 

Then, $\realAA(1)$ provides the parties with $1$-close real values $j$ that lie within the range of honest values $i$. By applying Remark~\ref{remark:combined}, we conclude that the honest parties obtain $1$-close integer values $\closestInt{j}$ that are within the range of honest values $i$. Finally, using \Cref{lemma:list-construction}, it follows that the list elements $L_{\closestInt{j}}$ corresponding to honest parties' values $j$ are $1$-close vertices that belong to a subtree rooted at a valid vertex.

\begin{wrapfigure}{r}{5.3cm}
\centering
\includegraphics[width=5cm]{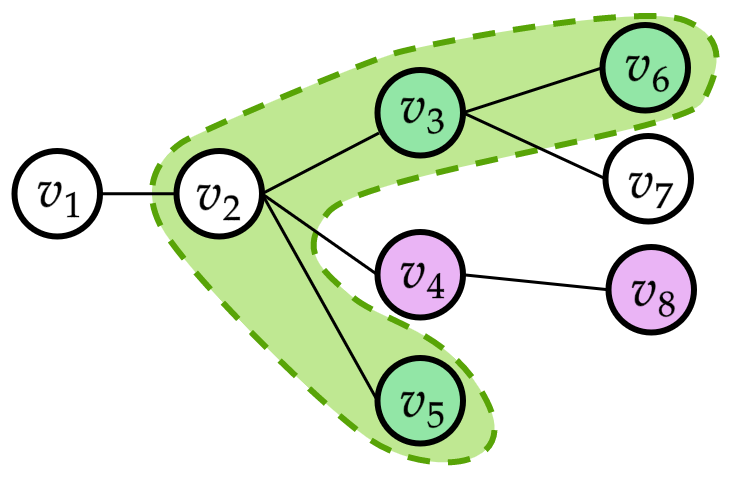}
\caption{Vertices $v_4$, $v_8$ are not valid, but are in the subtree of a valid vertex (with respect to root $v_1$).}\label{figure:small-tree-extra}
\end{wrapfigure} 

We note that this \textbf{does not} imply that the vertices $L_{\closestInt{j}}$ are valid. Consider again the input space tree in Figure~\ref{figure:small-tree}, depicted in Figure~\ref{figure:small-tree-extra}. 
If the honest inputs are $v_3$, $v_6$, and $v_5$, their convex hull is $\{v_5, v_2, v_3, v_6\}$. 
The honest parties join $\realAA(1)$ with indices $i$ chosen from $L(v_3) = \{3, 5, 7\}$, $L(v_6) = \{4\}$, and $L(v_5) = \{13\}$. 
Note that the indices in $L(v_4) = \{9, 11\}$ and $L(v_8) = \{10\}$ lie within the range of honest indices $i$. 
Thus, $\realAA(1)$ may return outputs $j$ such that $\closestInt{j}$ corresponds to, for example, $v_4$ or $v_8$ -- these are vertices that lie outside the honest inputs' convex hull $\{v_5, v_2, v_3, v_6\}$. 
However, $v_4$ and $v_8$ still belong to the subtree rooted at the valid vertex $v_2$.

\paragraph{Paths intersecting the honest inputs' convex hull.}
In the example above, since vertices $v_4$ and $v_8$ lie in the subtree rooted at the valid vertex $v_2$, the paths $\pathh(v_1, v_4)$ and $\pathh(v_1, v_8)$ intersect the convex hull of the honest inputs.
The lemma below shows that this property holds more generally: if an honest party obtains vertex $L_{\closestInt{j}}$ via our approach, then the path $\pathh(\roott, L_{\closestInt{j}})$ intersects the honest inputs' convex hull.

\begin{lemma} \label{lemma:path-crosses-convex-hull}
    Assume a rooted tree $\tree$ with root $\roott$ and let $L := \listconstruction(\tree, \roott)$.
    Consider a non-empty set of vertices $S \subseteq \vertices(\tree)$, and let $i_{\min}$ and $i_{\max}$ denote the lowest and highest indices in $L$ corresponding to vertices in $S$, respectively: $i_{\min} := \min \bigcup_{v \in S} L(v)$ and $i_{\max} := \max \bigcup_{v \in S} L(v)$. 
    
    Then, for any $i$ satisfying $i_{\min} \leq i \leq i_{\max}$, the path $\pathh(\roott, L_i)$ intersects the convex hull of $S$; that is,
    $\vertices(\pathh(\roott, L_i)) \cap \hull{S} \neq \emptyset$.
\end{lemma}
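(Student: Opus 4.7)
The plan is to exhibit a single vertex $a$ that lies simultaneously in $\hull{S}$ and on $\pathh(\roott, L_i)$, which immediately yields the desired non-empty intersection. The natural candidate is the lowest common ancestor of two extremal elements of $S$. Concretely, I would pick $u, w \in S$ with $L_{i_{\min}} = u$ and $L_{i_{\max}} = w$ (these exist by the definitions of $i_{\min}$ and $i_{\max}$) and set $a := \lca(u, w)$ in $\tree$ rooted at $\roott$. Since $u, w \in S \subseteq \hull{S}$ and $\hull{S}$ is connected, the path $\pathh(u, w)$ is contained in $\hull{S}$; as $a \in \pathh(u, w)$, we obtain $a \in \hull{S}$.

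Next, I would use Property 3 of \Cref{lemma:list-construction} to show that $i$ falls inside the interval $[\min L(a), \max L(a)]$. Because $u$ and $w$ are both in the subtree rooted at $a$, Property 3 yields $L(u), L(w) \subseteq [\min L(a), \max L(a)]$. Since $i_{\min} \in L(u)$ and $i_{\max} \in L(w)$, we conclude $\min L(a) \leq i_{\min} \leq i \leq i_{\max} \leq \max L(a)$.

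The final step --- and the main obstacle --- is to deduce from $i \in [\min L(a), \max L(a)]$ that the vertex $L_i$ itself lies in the subtree rooted at $a$: this would make $a$ an ancestor of $L_i$ (or equal to it), placing $a$ on $\pathh(\roott, L_i)$. The delicate point is that Property 3 only characterizes \emph{which vertices} lie in $a$'s subtree in terms of \emph{all} of their occurrences in $L$; it does not directly say which \emph{positions} of $L$ host vertices of that subtree, and the latter is precisely what we need. I would close this gap by appealing to the structure of the $\dfs$ routine inside $\listconstruction$: by a straightforward induction on the recursion depth, the recursive call $\dfs(a)$ appends to $L$ only vertices of the subtree rooted at $a$, and its execution spans exactly positions $\min L(a), \min L(a)+1, \ldots, \max L(a)$ of the global list. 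Consequently $L_i$ belongs to the subtree of $a$, so $a$ is an ancestor of $L_i$, and therefore $a \in \pathh(\roott, L_i) \cap \hull{S}$, completing the proof.
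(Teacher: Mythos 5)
Your proof is correct and follows essentially the same route as the paper's: both identify the lowest common ancestor $a$ of $L_{i_{\min}}$ and $L_{i_{\max}}$ as the witness vertex in $\hull{S}$ and use the DFS index-interval structure (Property 3 of \Cref{lemma:list-construction}) to place $L_i$ in the subtree rooted at $a$, so that $a$ lies on $\pathh(\roott, L_i)$; the paper merely phrases this as a proof by contradiction while you argue directly. You are also right that Property 3 as literally stated only constrains the occurrences of a given vertex rather than the contents of the positions in $[\min L(a), \max L(a)]$ --- the paper silently uses the stronger fact you supply (that the call $\dfs(a)$ fills exactly those positions, all with vertices of $a$'s subtree), so your explicit appeal to the recursion closes a step the published proof leaves implicit.
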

\begin{proof} 
Let $\lca$ be the lowest common ancestor of $L_{i_{\min}}$ and $L_{i_{\max}}$. 
By definition $L_{i_{\min}}, L_{i_{\max}}\in S$, and, as $\lca$ lies on the unique path connecting these two vertices in $\tree$, we have $\lca \in \hull{S}$. In the following, we show that, for each $i$ satisfying $i_{\min} \leq i \leq i_{\max}$, $\lca$ is one of the vertices in path $\pathh(\roott, L_i)$, which implies that  $\pathh(\roott, L_i)$ intersects the convex hull of $S$.

    If $a := \min L(\lca)$ and $b := \max L(\lca)$, \Cref{lemma:list-construction} guarantees that a vertex $v$ is in the subtree rooted at $\lca$ if and only if $L(v) \subseteq [a, b]$. Then, since $L_{i_{\min}}$ and $L_{i_{\max}}$ are in $\lca$’s subtree, we know that $i_{\min}, i_{\max}\in [a,b]$.

 Moreover, by hypothesis $i_{\min}\le i\le i_{\max}$, so $i\in [a,b]$ as well. By the property of $L$ just stated (from \Cref{lemma:list-construction}), it follows that $L_i$ \emph{lies in the subtree rooted at $\lca$}. Equivalently, $\lca$ is an ancestor of $L_i$ in $\tree$. Finally, since $\lca$ is an ancestor of $L_i$, the unique path from $\roott$ to $L_i$ \emph{must} pass through $\lca$.
 Therefore $\vertices(\pathh(\roott, L_i)) \cap \hull{S} \neq \emptyset$, which concludes our proof.
\end{proof}

\paragraph{Subprotocol $\pathfinder$.} We now present our subprotocol $\pathfinder$. Each party computes the list representation $L$ of the (rooted) input space tree $\tree$ using the subprotocol $\listconstruction$, which provides all parties with the same list $L$.

Afterwards, the parties execute $\realAA(1)$ on the list $L$: 
each party joins $\realAA(1)$ with an index in $L$ that corresponds to its input vertex $v_{\inputt}$. 
$\realAA(1)$ then provides the parties with indices corresponding to $1$-close vertices $v$ in $\tree$. Finally, each honest party returns the path from the root vertex $\roott$ to the vertex obtained. 

\begin{dianabox}{$\pathfinder(\tree, \roott, v_{\inputt})$}
	\algoHeadNoBold{Code for party $\party$, given the labeled tree $\tree$ with root vertex $\roott$ and input $v_{\inputt} \in \vertices(\tree)$}
	\begin{algorithmic}[1]
            \State $L := \listConstruction(\tree, \roott)$.
            \State Join $\realAA(1)$ with input $i := \min L(v_{\inputt})$; obtain output $j \in \realvalues$.
            \State 
            \Return $\pathh := \pathh(\roott, L_{\closestInt{j}}).$ 
    \end{algorithmic}
\end{dianabox}

We establish the guarantees of $\pathfinder$ below. 

\begin{lemma} \label{lemma:path-finder}
    Assume a protocol $\realAA$ achieving $\approximateagreement$ on $\mathbb{R}$ with round complexity $R_{\realAA}(\diameter, \varepsilon)$ when the honest inputs are $D$‑close and the target error is $\varepsilon$. Then, if the honest parties join $\pathfinder$ with the same labeled tree $\tree$ and the same root vertex $\roott$, $\pathfinder$ provides each honest party with a subpath $\pathh$ of $\tree$ such that:
    \begin{enumerate}[nosep]
        \item If $S$ denotes the set 
        of honest parties' inputs $v_{\inputt}$,  then $\vertices(\pathh) \cap \hull{S} \neq \emptyset$.
        \item There exists a subpath $\pathh^{\star} = (v_1 = \roott, v_2, \ldots, v_{k^\star + 1})$ of $\tree$ such that, for every honest party, either $\pathh = (v_1, v_2, \ldots, v_{k^\star})$ or $\pathh = (v_1, v_2, \ldots, v_{k^\star + 1})$.
    \end{enumerate}
    Moreover, each honest party obtains its path within $R_{\pathfinder} := R_{\realAA}( 2\cdot\abs{\vertices(\tree)}, 1)$ rounds. 
\end{lemma}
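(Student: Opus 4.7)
The plan is to reduce everything to the guarantees of $\realAA(1)$ on the integer indices produced by $\listConstruction$. Since $\listConstruction$ is purely local and deterministic and every honest party calls it with the same $(\tree, \roott)$, all honest parties compute the same list $L$, so the only source of disagreement between honest parties lies in the real-valued output $j$ of $\realAA$.

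For property~(1), I would chain the Validity of $\realAA$, Remark~\ref{remark:validity-helper} and Lemma~\ref{lemma:path-crosses-convex-hull}. Each honest input to $\realAA(1)$ is an integer of the form $\min L(v)$ with $v \in S$, so it belongs to $\bigcup_{v \in S} L(v)$; thus Validity places $j$ in the honest inputs' range, which sits inside $[\min \bigcup_{v \in S} L(v), \max \bigcup_{v \in S} L(v)]$, and Remark~\ref{remark:validity-helper} promotes this to $\closestInt(j) \in [\min \bigcup_{v \in S} L(v), \max \bigcup_{v \in S} L(v)]$. Lemma~\ref{lemma:path-crosses-convex-hull} then yields $\vertices(\pathh(\roott, L_{\closestInt(j)})) \cap \hull{S} \neq \emptyset$ directly.

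For property~(2), the $1$-Agreement of $\realAA(1)$ combined with Remark~\ref{remark:agreement-helper} forces all honest indices $\closestInt(j)$ to lie in a window $\{k, k+1\}$ of consecutive integers. If the window collapses to a single value, the common path $\pathh(\roott, L_k)$ already serves as $\pathh^\star$. Otherwise, Lemma~\ref{lemma:list-construction}(1) implies that $L_k$ and $L_{k+1}$ are adjacent in $\tree$, and the rooted orientation yields two cases: either $L_{k+1}$ is a child of $L_k$ (so $\pathh(\roott, L_{k+1}) = \pathh(\roott, L_k) \oplus (L_k, L_{k+1})$ and one sets $\pathh^\star := \pathh(\roott, L_{k+1})$), or $L_k$ is a child of $L_{k+1}$ (so one sets $\pathh^\star := \pathh(\roott, L_k)$). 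Either way, every honest party's path is a prefix of $\pathh^\star$ of length $k^\star$ or $k^\star + 1$ vertices, as required.

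The round complexity is inherited from $\realAA$: by Lemma~\ref{lemma:list-construction}(2), $\abs{L} \leq 2 \abs{\vertices(T)}$, so the integer inputs sit in a range of length at most $2 \abs{\vertices(T)}$ and Theorem~\ref{theorem:real-values-aa} bounds the number of rounds by $R_{\realAA}(2\abs{\vertices(T)}, 1)$. The only genuinely delicate step is the parent/child case split in property~(2); the rest is direct bookkeeping with the preceding lemmas.
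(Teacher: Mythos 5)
Your proposal is correct and follows essentially the same route as the paper's proof: Validity of $\realAA$ plus Remark~\ref{remark:validity-helper} and Lemma~\ref{lemma:path-crosses-convex-hull} for Property~1, and $1$-Agreement plus Remark~\ref{remark:agreement-helper} and the adjacency of consecutive list entries for Property~2, with the round bound from $\abs{L} \leq 2\abs{\vertices(\tree)}$. Your explicit parent/child case split is just a rephrasing of the paper's ``assume WLOG that $v$ is the lowest common ancestor of $v$ and $v'$'' step, so there is no substantive difference.
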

\begin{proof}
We first discuss Property 1.
Lemma \ref{lemma:list-construction} ensures that every vertex of $\tree$ appears in list $L$, and therefore every party obtains a well-defined index $i$.
Afterwards, $\realAA(1)$ ensures that parties obtain $1$-close real values $j$ that lie within the range of honest indices $i$. According to Remark \ref{remark:combined}, the values $\closestInt{j}$ are also within the range of honest indices $i$.
Then, since the honest parties have computed the list $L$ identically, Lemma \ref{lemma:path-crosses-convex-hull} ensures that each honest party obtains a path $\pathh$ that intersects the honest inputs' convex hull; therefore, Property 1 holds.

For Property 2, we note that, since the real values $j$ obtained by the honest parties are $1$-close, Remark \ref{remark:combined} ensures that the integer values $\closestInt{j}$ are also $1$-close. Moreover, Lemma \ref{lemma:list-construction} ensures that consecutive elements in list $L$ are adjacent vertices in $\tree$. Then, since the parties have computed the list representation $L$ identically, the parties obtain vertices $L_{\closestInt{j}}$ that are $1$-close in $\tree$.

If the honest parties have obtained the same vertex $L_{\closestInt{j}}$, then all honest parties obtain $\pathh^{\star} = \pathh(\roott, L_{\closestInt{j}})$, and hence Property 2 holds. Otherwise, honest parties have obtained adjacent vertices $v, v'$. Assume, without loss of generality, that $v$ is the parent of $v'$ in the tree $\tree$ rooted at $\roott$.
We may then define $\pathh^\star$ as the ``longer'' path obtained by the honest parties, i.e., $\pathh^\star := \pathh(\roott, v')$: parties that have obtained vertex $v'$ return path $\pathh^\star$, while parties that have obtained vertex $v$ return path $\pathh(\roott, v)$ such that $v' \notin \vertices(\pathh(\roott, v))$ (since $v$ is the parent of $v'$) and $\pathh^\star = \pathh(\roott, v) \oplus (v, v')$. Consequently, Property 2 holds in this case as well.

Finally, the number of rounds follows from the round complexity of $\realAA$, using the fact that we run $\realAA$ with $\varepsilon := 1$ on inputs between $1$ and $\abs{L}$, and $\abs{L} \leq 2 \cdot \abs{\vertices(\tree)}$ according to \Cref{lemma:list-construction}. 
\end{proof}

\subsubsection{Protocol} \label{section:final-protocol}
We now combine the components described in \Cref{section:warmup} and \Cref{section:path} to present an almost-optimal protocol achieving $\approximateagreement$ on trees, denoted by~$\treeAAOld$.

The parties begin by fixing the root vertex ${\roott}$ of the input space tree $\tree$ as the vertex with the lowest label in lexicographic order. Afterwards, they run the subprotocol $\pathfinder$ described in Section \ref{section:path} to \emph{approximately} agree on paths that intersect the honest inputs' convex hull, as described by Lemma \ref{lemma:path-finder}. 

Once these paths are obtained, each party $\party$  proceeds according to the approach described in Section \ref{section:path-assumed}: it denotes the $k$ vertices on its own path $\pathh$ as $(v_1 := \roott, v_2, \ldots, v_k := v)$. It joins $\realAA(1)$ with input $i$, where the vertex denoted by $v_i$ in  $\party$'s path $\pathh$ is the projection of $\party$'s input vertex $v_{\inputt}$ onto $\pathh$; formally $v_i = \projection_{\pathh}(v_{\inputt})$.

Upon obtaining an output $j$ from $\realAA(1)$, party~$\party$ \emph{should} output the vertex denoted by $v_{\closestInt{j}}$ in its path $\pathh$.  However, this is the point where we need to be careful about honest parties holding different paths: an honest party $\party$ might obtain $j > k$ from $\realAA(1)$. If this is the case, then $\party$ holds the ``shorter'' path $(v_1, v_2, \ldots, v_{k})$, while other honest parties hold the ``longer'' path $\pathh^\star = (v_1, v_2, \ldots, v_{k+1})$, as described in Lemma~\ref{lemma:path-finder}. 
In this case, if $\closestInt{j} = k + 1$ and vertex $v_k$ has at least three neighbors (as illustrated in Figure \ref{figure:two-paths}) party~$\party$ cannot uniquely determine which  of the neighbors corresponds to $v_{k + 1}$ -- the last vertex of the ``longer path''.

\begin{figure}[h]
\centering
\includegraphics[width=0.7\textwidth]{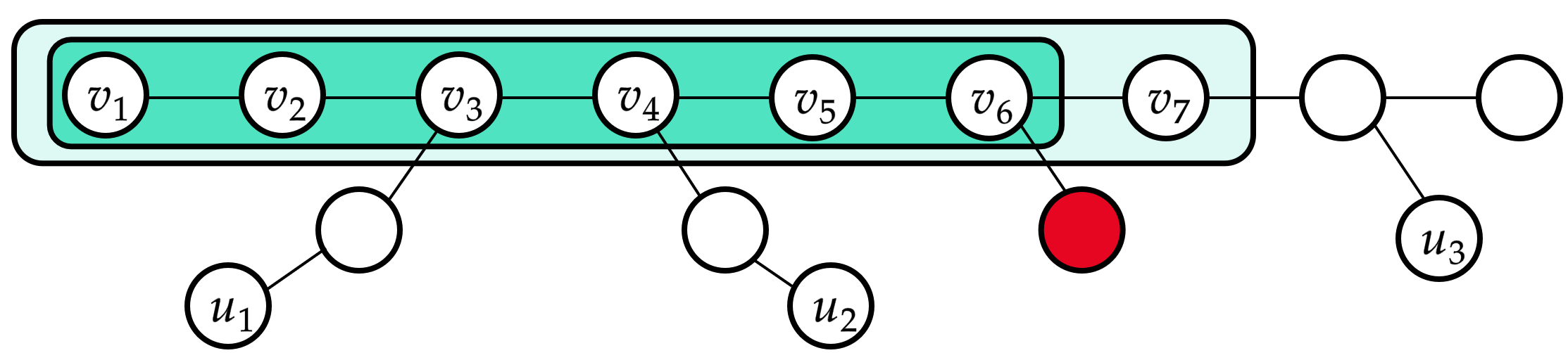}
\caption{In this figure, vertices $u_1, u_2, u_3$ are the honest inputs, and the highlighted paths 
$(v_1, v_2, \ldots, v_6)$ and $(v_1, v_2, \ldots, v_7)$ represent the paths $\pathh$ that the honest parties obtained via $\pathfinder$.
Note that an honest party $\party$ that holds $\pathh = (v_1, \ldots, v_6)$ might obtain a value $j$ such that $\closestInt{j} = 7$, and $\party$ does not know whether $v_7$ should be the actual vertex $v_7$ or the red vertex adjacent to $v_6$. The red vertex is, in fact, outside the honest inputs' convex hull.
}\label{figure:two-paths}
\end{figure}

In $\treeAAOld$, if party $\party$ obtains $j > k$, it simply outputs $v_k$. We will show that, in this case, all honest parties output either $v_{k}$ or, if they hold the longer path $P$, $v_{k + 1}$, and therefore $\approximateagreement$ is achieved.

We present the code of $\treeAAOld$ below. 

\begin{dianabox}{$\treeAAOld$}
	\algoHeadNoBold{Code for party $\party$ with input $v_{\inputt} \in \vertices(\tree)$}
	\begin{algorithmic}[1]
            \State $\roott :=$ the vertex in $\tree$ with the lowest label (in lexicographic order).
            \State  $\pathh := \pathfinder(\tree, \roott, v_{\inputt}).$ 
            \State Denote the $k = \abs{V(\pathh)}$ vertices in $\pathh$ by $(v_1 := \roott, v_2, \ldots, v_k)$.
            \State Wait until round $R_{\pathfinder}$ ends.
            \State Join $\realAA(1)$ with input $i$ such that $\projection_{\pathh}(v_{\inputt}) = v_i$; obtain output $j$.
            \State If $\closestInt{j} > k$, output the vertex denoted by $v_{k}$ in $\vertices(\pathh)$; otherwise, output the vertex denoted by $v_{\closestInt{j}}$ in $\vertices(\pathh)$.
    \end{algorithmic}
\end{dianabox}

Note that in line 4, the parties wait for a specific number of rounds, namely $R_{\pathfinder}$, before initiating the execution of $\realAA$ in line 5. This is because $\pathfinder$ -- which uses $\realAA$ as a building block -- does not guarantee that all the honest parties obtain their paths in the same round (as briefly mentioned in Section \ref{section:warmup}). 
Instead, it only guarantees that all the honest parties obtain their paths within $R_{\pathfinder}$ rounds. Hence, by waiting until the end of round $R_{\pathfinder}$, all the honest parties can start executing $\realAA$ in line 5 simultaneously.

The theorem below describes the guarantees provided by protocol~$\treeAAOld$.

\begin{theorem}
    Assume a protocol $\realAA$ achieving $\approximateagreement$ on $\mathbb{R}$ with round complexity $R_{\realAA}(\diameter, \varepsilon)$ when the honest inputs are $D$‑close and the target error is $\varepsilon$.
  
    Even when up to $t < n$ of the $n$ parties involved are Byzantine, protocol $\treeAAOld$ achieves $\approximateagreement$ on any input space tree $\tree$ within
    $R_{\realAA}(2 \cdot \abs{\vertices(\tree)}, 1) + R_{\realAA}(\diameter(\tree), 1)$ rounds.    
\end{theorem}

\begin{proof}
    We first note that \Cref{lemma:path-finder} ensures that the honest parties obtain their paths $\pathh$ within $R_{\pathfinder}$ rounds, and therefore the honest parties  start the execution of $\realAA(1)$ simultaneously in line 5 of the protocol. Then $\realAA(1)$ guarantees that the honest parties obtain $1$-close real values $j$ that are within the range of honest inputs $i$. 
    Using \Cref{remark:combined}, we may establish that the same properties hold for the integer values $\closestInt{j}$ obtained by the honest parties; these are also in the range of honest inputs $i$, and are also $1$-close.
    
    As the honest values $i$ are $\diameter(\tree)$-close, the honest parties obtain values $j$ within $R_{\realAA}(\diameter(\tree), 1)$ rounds. This implies that $\treeAAOld$ achieves Termination within $R_{\pathfinder} + R_{\realAA}(\diameter(\tree), 1)$ rounds. Afterwards, using \Cref{lemma:path-finder}, we obtain that $\treeAAOld$ achieves Termination within $ R_{\realAA}(\abs{\vertices(\tree)}, 1) + R_{\realAA}(\diameter(\tree), 1)$ rounds.

    In order to discuss Validity and $1$-Agreement, it is important to establish that, in line~3, the honest parties denote the vertices of their paths $\pathh$ in a consistent manner. According to \Cref{lemma:path-finder}, and, as depicted in \Cref{figure:two-paths}, there are two $1$-close vertices $v$ and $v'$ with the following properties:
    \begin{enumerate}[label=(\roman*)]
        \item $v$ is the parent of $v'$ (with respect to the root vertex $\roott$);
        \item every honest party obtains a path $\pathh$ such that either $\pathh = \pathh(\roott, v)$ or, if $v \neq v'$, $\pathh = \pathh(\roott, v') = \pathh(\roott, v) \oplus (v, v')$.
    \end{enumerate}
    
    Then all honest parties denote the vertices in $\pathh(\roott, v)$ identically. In addition, honest parties holding path $\pathh(\roott, v')$ denote vertex $v'$ identically.
    
    We may now discuss Validity. \Cref{lemma:path-finder} ensures that the paths $\pathh$ the honest parties obtain (hence, both $\pathh(\roott, v)$ and $\pathh(\roott, v')$) intersect the honest inputs' convex hull. Applying \Cref{remark:convex-hull-maintained}, this implies that, for every honest party, $\projection_{\pathh}(v_{\inputt})$ is a valid vertex. Hence, for every honest party's index $i$, the vertex denoted by $v_i$ in $\pathh(\roott, v')$ is a valid vertex. Moreover, if $i_{\min}$ and $i_{\max}$ represent the lowest and the highest honest indices $i$, then all vertices denoted by $v_{i_{\min}}, \ldots, v_{i_{\max}}$ in $\pathh(\roott, v')$ are valid.
    We recall that honest parties obtain integer values $\closestInt{j} \in [i_{\min}, i_{\max}]$.
    If the condition $\closestInt{j} \leq k$ holds for an honest party $\party$ (i.e., it knows which vertex is denoted by $v_{\closestInt{j}}$), then it follows immediately that $\party$ outputs a valid vertex $v_{\closestInt{j}}$. Otherwise, if an honest party $\party$ has obtained $\closestInt{j} > k$, then $\party$ holds the \emph{shorter} path $\pathh$: $v \neq v'$, $\party$ holds the path $\pathh(\roott, v)$ of $k^\star$ vertices. Moreover, there is an honest party $\party'$ that has obtained the path $\pathh(\roott, v) \oplus (v, v')$ of $k^\star + 1$ vertices, and has joined $\realAA(1)$ with index $i = k^\star + 1$. This implies that $k^\star \in [i_{\min}, i_{\max}]$ and therefore $\party$'s vertex $v_k$ (where $k = k^\star$) is a valid vertex. Consequently, $\treeAAOld$ achieves Validity.

    For $1$-Agreement, we recall that honest parties obtain $1$-close integer values $\closestInt{j}$. Therefore, the vertices denoted by $v_{\closestInt{j}}$ in $\pathh(\roott, v')$ are $1$-close. This implies that, if the condition $\closestInt{j} \leq k$ is satisfied for all honest parties, $1$-Agreement is achieved. We still need to discuss the case where some honest party $\party$ obtains $\closestInt{j} > k$: again, this means that  $v \neq v'$, $\party$ holds the path $\pathh(\roott, v)$ of $k^\star$ vertices, and there is an honest party $\party'$ that holds the path $\pathh(\roott, v) \oplus (v, v')$ of $k^\star + 1$ vertices. In this case, we show that all honest parties obtain $\closestInt{j} \in \{k^\star, k^\star + 1\}$.
    First, since the honest parties' integer values $\closestInt{j}$ are $1$-close, every honest party has obtained $\closestInt{j} \geq k^\star$.
    Second, \Cref{lemma:path-finder} ensures that no honest party holds a path $\pathh$ consisting of more than $k^\star + 1$ vertices, which implies that $i_{\max} \leq k^\star+1$. As the honest parties' integer values $\closestInt{j}$ are in $[i_{\min}, i_{\max}]$, every honest party obtains $\closestInt{j} \leq  k^\star + 1$. Therefore, every honest party obtains $\closestInt{j} \in \{k^\star, k^\star + 1\}$.  
    Hence, as every honest party holds either $k = k^\star$ or $k = k^\star + 1$, every honest party outputs either $v_{k^\star}$ or $v_{k^\star + 1}$, so $1$-Agreement holds. We may therefore conclude that $\treeAAOld$ achieves $\approximateagreement$.
\end{proof}

By instantiating $\realAA$ with the protocol described in \Cref{theorem:real-values-aa}, we obtain the following:
\begin{corollary}
    There is a protocol $\treeAAOld$ achieving $\approximateagreement$ on any input space tree $\tree$ whenever up to $t < n / 3$ of the $n$ parties involved are Byzantine, within
    $O \left(\frac{\log \abs{\vertices(\tree)}}{\log \log \abs{\vertices(\tree)}}\right)$ rounds.    
\end{corollary}

Considering our lower bound presented in \cref{section:lower-bound}, this solution is not yet optimal: given that $t \in \Theta(n)$, it matches the lower bound whenever $\log \diameter(\tree) \in \Theta(\log \vertices(\tree))$, hence for trees $\tree$ with diameter  $\diameter(\tree) \in |\vertices(\tree)|^{\Theta(1)}$. Combining this with the protocol of \cite{DISC:NoRy19} that solves $\approximateagreement$ on trees within $O(\log \diameter(\tree))$ rounds, we have optimal protocols whenever the input space tree's diameter is either constant or polynomial in the number of vertices (for some constant exponent $\leq 1$). Thus, this leaves a gap for trees $\tree$ of diameter $\diameter(\tree) \in \left[\omega(1), \abs{\vertices(\tree)}^{o(1)} \right]$. In the subsequent section, we give an improved protocol that closes this gap.

\subsection{Weakening the Path Assumption: Optimal Solution} \label{section:fox-paths}

We observe that the idea of \Cref{section:path-assumed} does not require the honest parties to \emph{know the same path} that intersects the honest inputs' convex hull. 
As pointed out in \Cref{section:brief-announcement-solution}, what the argument needs is only a weak form of pre-agreement. We now explore an even weaker form of pre-agreement, which leads to our final solution for $\approximateagreement$ on trees.
Concretely, each honest party should hold two paths $P$ and $Q$, which may be distinct over the parties. However, the following should hold: (i) every honest party's path $P$ intersects the honest inputs' convex hull; (ii) every honest party's path $Q$ has every honest party's path $P$ as a prefix, as defined below, and (iii) every honest party can map the value $j$ obtained via $\realAA$ to a valid vertex $v_{\closestInt{j}}$. \Cref{figure:new-approach} depicts these assumptions.
\begin{definition}[Path Prefix]\label{def:path-prefix}
A path $P=(v_0,\dots,v_k)$ is a \emph{prefix} of a path $Q=(u_0,\dots,u_\ell)$ if $k \leq \ell$ and
$v_i = u_i$ for all $i \in \{0,\dots,k\}$.
\end{definition}

\begin{figure}[h]
\centering
\vspace{-2.5em}
\includegraphics[width=0.75\textwidth]{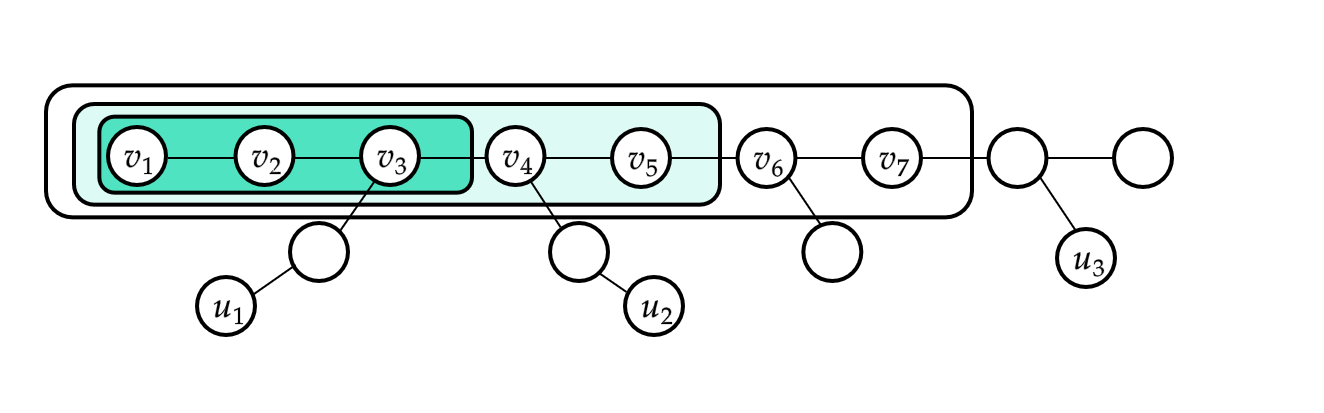}
\vspace{-2.5em}
\caption{An honest party $\party$ may hold a path $Q$ such as $(v_1, \ldots, v_7)$. Moreover, all honest parties' paths $P$ are guaranteed to be prefixes of $\party$'s path $Q$, such as those highlighted in cyan and light cyan:  $(v_1, v_2, v_3)$, and $(v_1, v_2, \ldots, v_5)$.
Note that the honest parties' paths $P$ intersect the honest inputs' convex hull $\hull{\{u_1, u_2, u_3\}}$.} \label{figure:new-approach}
\end{figure}

We explain how we can provide each honest party with such paths $P$ and $Q$ in \Cref{section:fox-path-finder}.
Given these paths, we proceed as follows: each honest party denotes the $\ell$ vertices of its path $Q$ by $(v_1,\dots,v_\ell)$. It joins $\realAA(1)$ with input $i$, where $v_i$ denotes the projection of its input vertex onto the path $P$, i.e., $\projection_{\pathh}(v_{\inputt})$. Protocol $\realAA(1)$ returns real values $j$ that are $1$-close and lie within the range of the honest inputs $i$, so by
\Cref{remark:combined}
the rounded integers $\closestInt{j}$ are also $1$-close and within that same range. Finally, each party $\party$ outputs the vertex $v_{\closestInt{j}}$ on its path $Q$: note that $\closestInt{j}$ is at most the length of the longest honest path $P$, and therefore $\party$ has assigned a label for it.

\begin{dianabox}{$\treeAA(\tree)$}
	\algoHeadNoBold{Code for party $\party$ with input $v_{\inputt} \in \vertices(\tree)$, and with paths $P$, $Q$ in $\tree$ (as defined in \Cref{lemma:treeAA-graded})}
	\begin{algorithmic}[1]
            \State Denote the $\ell = \abs{\vertices(Q)}$ vertices in the (directed) path $Q$ by $(v_1, v_2, \ldots, v_\ell)$.
            \State Join $\realAA(1)$ with input $i$ such that $\projection_{\pathh}(v_{\inputt}) = v_i$; obtain output $j$.
            \State Output the vertex denoted by $v_{\closestInt{j}}$ in $\vertices(Q)$.
    \end{algorithmic}
\end{dianabox}

We will prove that $\treeAA(\tree)$ achieves $\approximateagreement$ given that all honest parties hold such paths $P$ and $Q$. In \Cref{section:fox-path-finder} we explain how to obtain such paths, and we put everything together in \Cref{section:fox-final}.

\begin{lemma}\label{lemma:treeAA-graded}
Let $\tree$ be a labeled tree of diameter $D$.  Suppose each party is provided with two paths $P$ and $Q$ (that may be distinct across the parties) in $\tree$ such that:
\begin{enumerate}
\item every honest party's path $P$ intersects the honest inputs' convex hull;
\item every honest party's path $P$ is a prefix of its path $Q$, and if two honest parties hold $(P,Q)$ and $(P',Q')$ respectively, then $P$ is a prefix of $Q'$.
\end{enumerate}
Then, for any protocol $\realAA$ that achieves $\approximateagreement$ on $\mathbb{R}$ in $R_{\realAA}(D, \varepsilon)$ rounds when the honest inputs are $D$-close and the target error is $\varepsilon$, there is a protocol that solves $\approximateagreement$ on $\tree$ in $R_{\realAA}(D,1)$ rounds.

\end{lemma}

We split the proof of \Cref{lemma:treeAA-graded} into multiple steps. Each of the lemmas below makes the same assumptions on the honest parties' paths $P$ and $Q$ as \Cref{lemma:treeAA-graded}.
We first note that these assumptions immediately imply that the parties assign consistent labels to the vertices in their paths $Q$, as stated below. 
\begin{remark} \label{lemma:consistent-labels}
    Let $P^{\star}$ denote the longest path $\pathh$ that an honest party joins $\treeAA(\tree)$ with, and denote its vertices by $(v_1^{\star}, \ldots, v^{\star}_{k_{\max}})$. Then, for every honest party $\party$ that has labeled the vertices in its path $Q$ by $(v_1, \ldots, v_{\ell})$, the following hold: $\ell \geq k_{\max}$ and $v_i = v_i^\star$ for every $i$ with $1 \leq i \leq k_{\max}$.
\end{remark}

Next, we need to establish that the honest parties join $\realAA$ with well-defined values $i$.
\begin{lemma} \label{lemma:well-defined-overthinking}
    Let $k_{\max}$ denote the number of vertices in the longest path $\pathh$ that an honest party joins $\treeAA(\tree)$ with. Then, every honest party obtains an integer value $i$ satisfying $1 \leq i \leq k_{\max}$.
\end{lemma}
\begin{proof}
    The assumptions of \Cref{lemma:treeAA-graded} ensure that honest parties' paths $\pathh$ are non-empty, hence $\projection_{\pathh}(v_{\inputt})$ is well-defined. Then, since $\pathh$ contains $k \leq k_{\max}$ vertices denoted by $(v_1, v_2, \ldots, v_k)$, every honest party obtains a vertex $v_i = \projection_{\pathh}(v_{\inputt})$ such that $i$ is an integer satisfying $1 \leq i \leq k_{\max}$.
\end{proof}

In order to show that $\treeAA(\tree)$ achieves $1$-Agreement and Validity, we discuss the properties for the integer values $\closestInt{j}$ obtained by the honest parties in \Cref{lemma:closest-int}. Afterwards, \Cref{lemma:treeAA-outputs} translates these guarantees back to the original input space tree $T$.
\begin{lemma} \label{lemma:closest-int}
    In line 2, the honest parties obtain values $j$ such that the resulting integer values $\closestInt{j}$ are $1$-close and within the range of their values $i$.
\end{lemma}
\begin{proof}
By \Cref{lemma:well-defined-overthinking}, the honest parties obtain well-defined real values $i$, and then they simultaneously invoke $\realAA(1)$. Now, \Cref{theorem:real-values-aa} applied with $\varepsilon=1$ guarantees that the honest parties obtain $1$-close real values $j$ that lie within the range of the honest values $i$.  Finally, \Cref{remark:combined}
implies that (1) the rounded values $\closestInt{j}$ are still pairwise within distance~1, and (2)
each $\closestInt{j}$ lies in the honest range of values $i$, as claimed.
\end{proof}

\begin{lemma}\label{lemma:treeAA-outputs}
    The honest parties' outputs 
    are $1$-close vertices in the convex hull of the honest inputs.
\end{lemma}

\begin{proof}
    To prove the statement, consider an arbitrary honest party $p$ and map its output $v_{\closestInt{j}_p}$ onto the path $P^\star$ described in \Cref{lemma:consistent-labels}. Recall that $P^\star$ is the longest path $\pathh$ that some honest party has joined $\treeAA(\tree)$ with. We denote the vertices of $P^\star$ by $(v_1^{\star}, \ldots, v^{\star}_{k_{\max}})$.

    By \Cref{lemma:well-defined-overthinking}, each honest party $\party$ obtains a value $i_p$ satisfying $1\le i_p\le k_{\max}$. 
    Moreover, \Cref{lemma:closest-int} ensures that for any two honest parties $\party$ and $\party'$, the integer values $\closestInt{j}_p$ and $\closestInt{j}_{p'}$ are $1$-close. 
    Let $i_{\min} := \min_h i_h$ and $i_{\max} := \max_h i_h$ be the minimum and maximum $i_h$ value over all honest parties $h$. 
    For every honest $p$, we have that $i_{\min} \le \closestInt{j}_p \le i_{\max} \le k_{\max}$. Furthermore, \Cref{lemma:consistent-labels} implies that for every honest party $\party$ and every $k\le k_{\max}$ we have $v_k=v^\star_k$. Hence $v_{\closestInt{j}_p}=v^\star_{\closestInt{j}_p}$, and therefore, the honest parties' outputs are pairwise $1$-close along $P^\star$.

    For Validity, by the assumptions of \Cref{lemma:treeAA-graded}, each honest party’s path $P_p$ intersects convex hull of the honest inputs. By \Cref{remark:convex-hull-maintained}, every projection vertex $v_{i_p}$ lies in the convex hull of honest inputs. Consequently, for every index $k$ satisfying $i_{\min} \le k \le i_{\max}$, the vertex $v^\star_k$  is in the convex hull of the honest inputs. Since $\closestInt{j}_p$ lies within the range of honest values, we obtain that each honest party $\party$ outputs the vertex  $v_{\closestInt{j}_p}=v^{\star}_{\closestInt{j}_p}$ in the convex hull of the honest inputs.
\end{proof}

We conclude the subsection by presenting the proof of \Cref{lemma:treeAA-graded}.

\begin{proof}[Proof of \Cref{lemma:treeAA-graded}]
We first discuss Termination. Given that each party is provided with two paths $P$ and $Q$ with the properties described in the lemma's statement, \Cref{lemma:well-defined-overthinking} guarantees each honest party obtains a well-defined value $i$ with $1 \le i \le k_{\max}\le \diameter$.  The parties then invoke $\realAA(1)$ (simultaneously) on those values, and hence the total round complexity is $R_{\realAA}(\diameter, 1)$.
The $1$-Agreement and Validity properties follow directly from \Cref{lemma:treeAA-outputs}.  Therefore, \(\treeAA(\tree)\) satisfies all requirements of \(\approximateagreement\).
\end{proof}

\subsection{Finding These Paths}\label{section:fox-path-finder}
    We now present our protocol $\foxpathfinder(\tree)$, providing the parties with paths satisfying the preconditions of $\treeAA$ (described in \Cref{lemma:treeAA-graded}) within $O(1)$ rounds.
    
    Each party joins $\foxpathfinder(\tree)$ with its input vertex $v_{\inputt}$. Then, we fix some vertex $v_{\startt}$ (e.g., the vertex in $\tree$ with the lowest label in lexicographic order) and each party communicates the path from $v_{\startt}$ to its input vertex $P(v_{\startt}, v_{\inputt})$. Each party will try to define its path $P$ roughly as the \emph{longest common prefix} of the paths $P(v_{\startt}, v_{\inputt})$ it has received: note that the honest parties might not have an identical view over the paths $P(v_{\startt}, v_{\inputt})$ due to the Byzantine parties. Then, each party will define its path $Q$ roughly as what it \emph{believes} other honest parties might have identified as a longest common prefix.
    We formalize this idea with the help of a mechanism called Gradecast ($\gradecast$), which ensures that parties distribute their paths in a consistent manner, i.e., Byzantine parties are unable to send different messages to different parties, but they can still cause inconsistencies by making sure that only a subset of the honest parties receive their messages. We note that $\gradecast$ delivers each message with a grade in $\{0, 1, 2\}$. Honest parties' messages are always delivered with grade $2$. In addition, if two parties receive messages with grades $g$ and $g'$ from a party $\party$, then $\abs{g - g'} \leq 1$. Moreover, if both $g$ and $g'$ are greater than zero, then the two parties have received the same message from $\party$.
     We state the formal definition of $\gradecast$ below, as presented in \cite{BenDoHo10}.
\begin{definition}(Gradecast)\label{def:gradecast}
    Let $\Pi$ be a protocol where a designated party $S$ (the sender) holds a value $v_S$. We say that $\Pi$ achieves $\gradecast$ if the following properties hold even when $t$ out of the $n$ parties involved are Byzantine: \textbf{(Termination)} Every honest party eventually outputs a pair $(v,g)$, where $v$ is a message and $g\in\{0,1,2\}$; \textbf{(Integrity)} If $S$ is honest, then every honest party outputs $(v_S,2)$; \textbf{(Consistency)} If two honest parties output $(v,g)$ and $(v',g')$ then $\abs{g - g'} \leq 1$. In addition, if $g, g' > 0$, then $v = v'$.
\end{definition}

We then make use of the $\gradecast$ protocol of \cite{BenDoHo10} (presented in the full version \cite{BDH10}), described below.
\begin{theorem}[Theorem 1 of \cite{BDH10}]\label{theorem:gradecast}
    There is a protocol $\gradecastprotocol$ achieving $\gradecast$ even when up to $t < n/3$ of the $n$ parties involved are Byzantine, with round complexity $R_{\gradecastprotocol} := 3$. 
\end{theorem}

Hence, each party will distribute its path $P(v_{\startt}, v_{\inputt})$ via $\gradecastprotocol$. Every party obtains at least $n - t$ paths with grade $2$ (from honest parties), plus up to $t$ paths with various grades. Then, each party defines its path $P$ as a path prefix \emph{supported} by $n - t$ parties with grade $2$, as defined below. 

\begin{definition}[Supported Path Prefix]\label{def:path-prefix-support}
Let $\{P_1, P_2, \dots, P_k\}$ be the collection of paths.
We say that a path $P$ is \emph{supported by $\ell$ parties} if $\bigl|\{i : 1 \le i \le k, P\text{ is a prefix of }P_i\}\bigr|
 \ge \ell.$
The \emph{path prefix supported by $\ell$ parties} is defined to be the longest path $P$ satisfying the above property.
\end{definition}

The paths $Q$ are defined as path prefixes supported by $n - t$ parties with grade at least $1$ instead of $2$, i.e., some honest parties may have received these supports with grade $2$.

\begin{dianabox}{$\foxpathfinder(\tree)$}
	\algoHeadNoBold{Code for party $\party$ with input $v_{\inputt} \in \vertices(\tree)$}
	\begin{algorithmic}[1]
            \State Send $P(v_{\startt}, v_{\inputt})$ to all parties via $\gradecastprotocol$, where $v_{\startt}$ is the vertex in $\tree$ with the lexicographically smallest label.
            \State Let $P: =$ the longest path prefix supported by at least $n - t$ parties with grade $2$.
            \State Let $Q :=$ the longest path prefix supported by at least $n - t$ parties with grade at least $1$.
            \State Output $(P, Q)$.
    \end{algorithmic}
    \label{alg:foxpathfinder}
\end{dianabox}

The lemma below presents the guarantees of $\foxpathfinder(\tree)$.
\begin{restatable}{lemma}{foxPathFinderLemma}\label{lemma:fox-path-finder}
Assume a protocol $\gradecastprotocol$ that achieves $\gradecast$ within $R_{\gradecastprotocol}$ rounds. Then, if $t < n/2$, $\foxpathfinder(\tree)$ provides each honest party within $R_{\gradecastprotocol}$ rounds with two non-empty paths $P$ and $Q$ such that: (1) $P$ intersects the honest inputs' convex hull; (2) $P$ is a prefix of $Q$.  Moreover, if two honest parties obtain $(P,Q)$ and $(P',Q')$ respectively, then $P$ is a prefix of $Q'$.
\end{restatable}
\begin{proof}
    Let $\lcp$ be the longest common prefix of the paths $P(v_{\startt},v_{\inputt})$ held by the honest parties. \Cref{lemma:path-lcp}, stated below, ensures that $\lcp$ is non-empty and ends at a valid vertex $v_{\lcp}$.

    For the first property, we show that every honest party's path $P$ has $\lcp$ as a prefix. 
    We recall that every honest party $\party$ defines $P$ as the longest path prefix supported by at least $n - t$ paths received with grade $2$. Since the honest parties' paths have prefix $\lcp$, $\party$ receives at least $n - t$ paths with prefix $\lcp$ with grade $2$ due to the definition of $\gradecast$ (\Cref{def:gradecast}). In addition, the honest parties receive up to $t < n - t$ (since $t < n/2$) paths that do not have $\lcp$ as a prefix. Hence, the longest path prefix supported by at least $n - t$ parties with grade $2$ in $\party$'s view must have $\lcp$ as a prefix, and therefore $P$ has prefix $\lcp$. As $\lcp$ ends at a valid vertex $v_{\lcp}$, we may conclude that $P$ intersects the honest inputs' convex hull.

    For the second property, assume that an honest party $\party$ has obtained a path $P$ in line 2, and an honest party $\party'$ has obtained a path $Q'$ in line 3. We prove that $P$ is a prefix of $Q'$: we show that $\party'$ receives support from at least $n - t$ parties for the path prefix $P$, and does not receive sufficient support for any other path prefix.
    We recall once again that the path $P$ was chosen by $\party$ as the longest prefix supported by at least $n-t$ paths received with grade $2$. 
    Following the definition of $\gradecast$, we know that if $p$ receives a message with grade $2$, all honest parties get the same message with grade at least $1$.
    This guarantees that $\party'$ has received at least $n - t$ paths with grade at least $1$ that have $P$ as a prefix, and up to $t < n - t$ (since $t < n/2$) messages supporting any path prefix different from $P$. Hence, $\party'$ defines $Q'$ as a path with prefix $P$. Note that this holds if $\party = \party'$ as well.

    We still need to discuss the round complexity: $\foxpathfinder(\tree)$ makes $n$ parallel invocations of $\gradecastprotocol$, and therefore terminates within $R_{\gradecastprotocol}$ rounds.
\end{proof}

\begin{lemma}\label{lemma:path-lcp}
    Let $S \subseteq \vertices(\tree)$ 
    and consider two vertices $v_{\last}, v'_{\last}$ in $\hull{S}$. Then, for any $v_{\startt} \in \vertices(\tree)$, the longest common prefix of $P(v_{\startt}, v_{\last})$ and $P(v_{\startt}, v'_{\last})$ is non-empty and ends in $\hull{S}$.
\end{lemma}
\begin{proof}
    Let $P(v_{\startt}, v)$ be the longest common prefix of $P(v_{\startt}, v_{\last})$ and $P(v_{\startt}, v'_{\last})$. This is non-empty, as both paths start with vertex $v_{\startt}$.
    Note that, by rooting the tree $\tree$ at $v_{\startt}$, vertex $v$ becomes the lowest common ancestor of $v_{\last}$ and $v'_{\last}$ and therefore lies on the unique path $P(v_{\last}, v'_{\last})$. Then, as $v_{\last}, v_{\last}' \in \hull{S}$, we obtain that $v \in \hull{S}$. 
\end{proof}

\subsection{Putting it all together} \label{section:fox-final}

In the following, we present our final protocol for $\approximateagreement$ on trees. The parties run $\foxpathfinder$, which provides them with paths meeting the preconditions of $\treeAA$. Afterwards, the parties run $\treeAA$, which provides them with the final outputs.

\begin{dianabox}{$\finalTreeAA(\tree)$}
	\algoHeadNoBold{Code for party $\party$ with input $v_{\inputt} \in \vertices(\tree)$}
	\begin{algorithmic}[1]
            \State  Join $\foxpathfinder(\tree)$ with input $v_{\inputt}$ and obtain paths $P$ and $Q$.
            \State Run $\treeAA(\tree)$ with inputs $v_{\inputt}$, $P$, $Q$ to compute $v_{\outputt}$; then output $v_{\outputt}$ and terminate.
    \end{algorithmic}
\end{dianabox}

The theorem below follows directly from \Cref{lemma:fox-path-finder} and \Cref{lemma:treeAA-graded}.
\begin{theorem}\label{theorem:opt-aa}
    Assume a $\gradecast$ protocol $\gradecastprotocol$ with round complexity $R_{\gradecastprotocol}$, and a protocol $\realAA$ achieving $\approximateagreement$ on $\mathbb{R}$ with round complexity $R_{\realAA}(\diameter, \varepsilon)$ when the honest inputs are $D$‑close and the target error is $\varepsilon$.
    Then, if $t < n/2$, $\finalTreeAA(\tree)$ achieves $\approximateagreement$ on any input space tree $\tree$ within $R_{\gradecastprotocol} + R_{\realAA}(\diameter(\tree), 1)$ rounds.
\end{theorem}

We obtain our final corollary by instantiating $\gradecastprotocol$ with the protocol described in \Cref{theorem:gradecast} and $\realAA$ with the protocol described in \Cref{theorem:real-values-aa}.
\begin{corollary} \label{cor:tree-aa}
    There is a synchronous protocol that achieves $\approximateagreement$ on any input space tree $\tree$ within $O\bigl(\tfrac{\log \diameter(\tree)}{\log\log \diameter(\tree)}\bigr)$ rounds even when up to $t < n/3$ of the $n$ parties involved are Byzantine.
\end{corollary}

\paragraph{A note on the $t < n / 2$ case.}
We have presented our protocol $\finalTreeAA(\tree)$ under the assumption that $t < n / 3$: this is the optimal resilience threshold in an \emph{unauthenticated setting} (i.e., no cryptographic assumptions)~\cite{JACM:DLPSW86}. This setting was chosen solely for simplicity of presentation. In fact, our reduction can be adapted to honest-majority \emph{authenticated} settings as well. The term \emph{authenticated} refers to settings assuming a public key infrastructure (PKI) and a secure digital signature scheme. Concretely, we may instantiate $\gradecastprotocol$ with the protocol of \cite{MicVai17}, described in \Cref{theorem:gradecast-pki}. 
\begin{theorem}[Theorem 4.1 of \cite{MicVai17}]\label{theorem:gradecast-pki}
    There is a protocol $\gradecastprotocol^\pki$ achieving $\gradecast$ in an authenticated setting even when up to $t < n/2$ of the $n$ parties involved are Byzantine, with round complexity~$R_{\gradecastprotocol^\pki} := 3$.
\end{theorem}

For $\realAA$, we obtain the result below by combining two protocols: the $\approximateagreement$ protocol of \cite{PODC:GhLiWa24} and the Proxcensus protocol of \cite{EUROCRYPT:GhGoLi22}.  
After making minor adjustments, the protocol of \cite{EUROCRYPT:GhGoLi22} becomes an $\approximateagreement$ protocol with round complexity $O \left( \frac{\log(\diameter/\varepsilon)}{\log\log(\diameter/\varepsilon)} \right)$ whenever $\diameter / \varepsilon = \omega(1)$. If $\diameter / \varepsilon = O(1)$ instead, we make use of the protocol of \cite{PODC:GhLiWa22}, which achieves Termination within $O(\log (\diameter/\varepsilon)) = O(1)$ rounds. We present this formally in \Cref{appendix:real-values-pki}.
\begin{restatable}{theorem} {RealAAPKI}\label{theorem:realAA-pki}
    If the honest inputs are $\diameter$-close real values, there is a protocol $\realAA^\pki(\diameter, \varepsilon)$ achieving $\approximateagreement$ in authenticated settings
    even when up to $t = c \cdot n$ (where $c$ is a constant in $[1/3, 1/2)$) of the $n$ parties are Byzantine. $\realAA^\pki(\diameter, \varepsilon)$ achieves Termination within
    $
      R_{\realAA^\pki}(\diameter, \varepsilon)
    = 
    O \left( \frac{\log(\diameter/\varepsilon)}{\log\log(\diameter/\varepsilon)}
    \right)
    $
rounds.
\end{restatable}

This yields a protocol with optimal round complexity up to $t = c \cdot n$ corruptions for any constant $c \in [1/3, 1/2)$, assuming a public key infrastructure and digital signatures, i.e., in the authenticated setting. This leads to the following corollary.
\begin{corollary} \label{cor:pki-tree-aa}
    There is a synchronous protocol that solves $\approximateagreement$ on any input space tree $\tree$ in $O\bigl(\tfrac{\log \diameter(\tree)}{\log\log \diameter(\tree)}\bigr)$ rounds in authenticated settings even when up to $t = c \cdot n$ of the $n$ parties are Byzantine, where $c$ is a constant in $[1/3, 1/2)$.
\end{corollary}
We add that the cryptographic assumptions lead to a protocol secure against a \emph{computationally bounded} adversary. When the digital signatures are replaced with real-world instantiations, the security of the protocol holds except with negligible probability (in the digital signature scheme's security parameter). 

\section{Extension to Block Graphs}
In this section, we extend our algorithm from trees to connected block graphs. Our approach utilizes the clique tree representation to reduce the problem on a block graph to a problem on a tree, enabling us to achieve optimal round complexity for block graphs $\graph$.  We first give the definition of block graphs and their clique tree structure, and afterwards present the reduction, and the resulting corollaries in the synchronous and asynchronous models.

\paragraph{Block Graphs.}
A graph $\graph =(V, E)$ is a \emph{block graph} if every maximal $2$-connected subgraph (block) of $\graph$ is a clique. Equivalently, $\graph$ is a chordal graph in which every two maximal cliques intersect in at most one vertex. 

\begin{figure}[ht]
\centering
\begin{tikzpicture}[
  v/.style={circle, draw=black, fill=white,  thick, inner sep=2.8pt},
  r/.style={circle, draw=black, fill=cyan!20!lime, thick, inner sep=2.8pt},
  thickedge/.style={draw=black, thick},
  block/.style={rectangle, draw=teal!80, fill=teal!10, thick, minimum size=6mm, rounded corners},
  tree_edge/.style={draw=teal!50!black, thick},scale=.7
]

\begin{scope}[local bounding box=leftgraph]

    \node[v] (a1) at (-6,2) {};
    \node[v] (a2) at (-5,3) {};
    \node[v] (a3) at (-4,2) {};
    \node[r] (a4) at (-5,1) {}; 
    \foreach \i/\j in {a1/a2,a2/a3,a3/a4,a4/a1,a1/a3,a2/a4} \draw[thickedge] (\i)--(\j);

    \node[v] (b1) at (-6,-1) {};
    \node[v] (b2) at (-5,-2) {};
    \node[r] (b3) at (-4,-1.2) {}; 
    \node[v] (b4) at (-4.5,0) {};
    \node[r] (b5) at (-5.5,0) {};  
    \foreach \i/\j in {b1/b2,b1/b3,b1/b4,b1/b5,b2/b3,b2/b4,b2/b5,b3/b4,b3/b5,b4/b5} \draw[thickedge] (\i)--(\j);

    \draw[thickedge] (a4)--(b5);

    \node[v] (g1) at (-6.7,0.8) {};
    \node[v] (g2) at (-6.9,-.5) {};
    \foreach \i in {g1,g2} \draw[thickedge](\i)--(b5);
    
    
    \node[v] (c1) at (-3.2,0) {};
    \node[r] (c2) at (-2.5,-1.2) {}; 
    \draw[thickedge] (b3)--(c2);
    \draw[thickedge] (b3)--(c1)--(c2);

    \node[v] (f1) at (-3.1,-2.2) {}; 
    \node[v] (f2) at (-1.9,-2.2) {}; 
    \node[v] (f3) at (-2.5,-3.2) {}; 
    \foreach \i/\j in {f1/f2,f1/f3,f2/f3} \draw[thickedge] (\i)--(\j);
    \foreach \i in {f1,f2,f3} \draw[thickedge] (\i)--(c2);
    
    \node[r] (r5) at (-1,-0.6) {}; 
    \node[v] (d1) at (-.2,-1.6) {};
    \node[v] (d2) at (.8,-0.6) {};
    \node[r] (d3) at (-.2,0.6) {}; 
    \foreach \i/\j in {d1/d2,d2/d3,d3/d1} \draw[thickedge] (\i)--(\j);
    \foreach \i in {d1,d2,d3} \draw[thickedge] (r5)--(\i);

    \draw[thickedge] (c2)--(r5);

    \node[v] (e1) at (-1.4,1.8) {};
    \node[r] (e2) at (-.2,2.3) {};
    \node[v] (e3) at (1,1.8) {};
    \node[v] (e4) at (0.6,.9) {};
    \node[v] (e5) at (-1,.9) {};
    \foreach \i/\j in {e1/e2,e1/e3,e1/e4,e1/e5,e2/e3,e2/e4,e2/e5,e3/e4,e3/e5,e4/e5} \draw[thickedge] (\i)--(\j);
    \foreach \i in {e1,e2,e3,e4,e5} \draw[thickedge] (d3)--(\i);
    
\end{scope}

\draw[dashed, gray] (1.5, 3.5) -- (1.5, -3);

\begin{scope}[xshift=8cm, yshift=0cm]

    \node[block] (k4) at (-4, 2.5) {$K_4$};
    \node[block] (k22) at (-4, .5) {$K_2$};
    \node[block] (k222) at (-5.8, .9) {$K_2$};
    \node[block] (k2222) at (-5.8, -.5) {$K_2$};
    \node[block] (k5) at (-5, -1.5) {$K_5$};
    \node[block] (k3)  at (-3, -1.5) {$K_3$};
    \node[block] (k444)  at (-3, -3.2) {$K_4$};
    \node[block] (k2)  at (-1.5, -1.5) {$K_2$};
    \node[block] (k44) at (0, 0) {$K_4$};
    \node[block] (k6)at (0, 2) {$K_6$};

    \draw[tree_edge](k22)--(k222);
    \draw[tree_edge](k22)--(k2222);

    \draw[tree_edge] (k4) -- (k22);
    \draw[tree_edge] (k22) -- (k5);
    \draw[tree_edge] (k5) -- (k3);
    \draw[tree_edge] (k3) -- (k2);
    \draw[tree_edge] (k3) -- (k444);
    \draw[tree_edge] (k2) -- (k44);
    \draw[tree_edge] (k44) -- (k6);
    \draw[tree_edge] (-1.121,-1.121)--(-.379,-.379);
\end{scope}

\end{tikzpicture}
\caption{A block graph and its clique tree representation.}
\label{fig:block-graph-tree}
\end{figure}

\paragraph{Clique trees.} For any chordal graph $\graph$, hence also for any block graph $\graph$, its maximal cliques can be arranged as the vertices of a tree $\tree$, called the \emph{clique tree} \cite{IPL:BerPog11}, as shown in \Cref{fig:block-graph-tree}. We use $\vertices_{\graph}(c)$ to denote the set of vertices in the clique vertex $c \in \vertices(\tree)$. If clique vertices $c, c'$ are adjacent in $\tree$, then $\vertices_{\graph}(c) \cap \vertices_{\graph}(c') \neq \emptyset$. The edges of $\tree$ represent a subset of the non-empty intersection relation between maximal cliques in $\graph$, selected such that the \emph{Running Intersection Property} holds: for any two maximal cliques $c, c'$, the intersection $\vertices_\graph(c) \cap \vertices_\graph(c')$ is contained in every clique vertex on the unique path connecting them in $\tree$. Such a clique tree of minimum diameter can be computed deterministically and efficiently in $O(|V| + |E|)$ time for any chordal graph~\cite{CliqueTreesAlg}. 

The following lemma states an outcome of the running intersection property on block graphs: every vertex in the intersection of at least two maximal cliques of $\graph$ is a cut vertex in graph $\graph$.

\begin{restatable}[Clique-Tree Edge Separator]{lemma}{CUTVERTEX}\label{lemma:block-graphs:clique-separator} Let $G$ be a connected block graph and let $T$ be a clique tree of $G$. Let $\{c,c'\}$ be an edge of $T$ and let $ u \in \vertices_{\graph}(c)\cap \vertices_{\graph}(c').$ Removing the edge $\{c,c'\}$ partitions $T$ into two connected components $T_{c}$ and $T_{c'}$ containing $c$ and $c'$, respectively. Let $V_{c} := \bigcup_{v \in\vertices(T_{c})} \vertices_{\graph}(v)$ and $V_{c'} := \bigcup_{v \in\vertices(T_{c'})} \vertices_{\graph}(v).$ Then for any $x \in V_{c} \setminus \{u\}$ and any $y \in V_{c'} \setminus \{u\}$, every path in $G$ from $x$ to $y$ passes through $u$. Consequently, $u$ is a cut vertex of $G$. 
\end{restatable}
\begin{proof}
The proof of this lemma is given in the appendix (see~\Cref{sec:appndix_proof_of_lemma}).
\end{proof}

\paragraph{Subdivided Clique Tree.} 
Our protocol relies on a \emph{subdivided clique tree} $\tree_S$ of $\graph$. Given a clique tree $\tree$ of $\graph$, the subdivided clique tree $\tree_S$ is obtained by subdividing every edge $\{c,c'\}$ of $\tree$ as follows: the edge $\{c,c'\}$ is replaced by two edges $\{c,c''\}$ and $\{c'',c'\}$, where $c''$ is a newly introduced intermediate vertex corresponding to the unique vertex in which the cliques $c$ and $c'$ intersect.
By construction, the diameter of the subdivided clique tree satisfies $\diameter(\tree_S) = 2\cdot\diameter(\tree)$. Note that this is similar to the \emph{extended clique tree} defined by \cite{DISC:NoRy19}.

\paragraph{Protocol.}
We now present our $\approximateagreement$ protocol for connected block graphs. First, note that the labeled graph $\graph$ is known to all parties. The parties then compute the same minimum-diameter labeled clique tree $\tree$ for $\graph$ using the algorithm guaranteed by 
~\cite{CliqueTreesAlg}, and hence the same minimum-diameter labeled subdivided clique tree $\tree_S$. The parties then run $\finalTreeAA(\tree_S)$ using as inputs the clique vertices in $\tree_S$ containing their input vertices from $\graph$. If a vertex belongs to multiple cliques of $\graph$, the party selects arbitrarily one of the corresponding clique vertices in $\tree_S$. This provides each party with a vertex $c_{\outputt}$ in $\tree_S$ which is either an intermediate vertex or a clique vertex. Each party then maps $c_{\outputt}$ to a vertex in the original graph $\graph$, which will be its output. We present the code below, and \Cref{theorem:opt-block-graphs} presents the guarantees.
\begin{dianabox}{$\blockgraphAA(\graph)$}
	\algoHeadNoBold{Code for party $\party$ with input $v_{\inputt} \in \vertices(\graph)$}
	\begin{algorithmic}[1]
            \State Let $\tree_S$ be a minimum-diameter labeled subdivided clique tree of $\graph$.
            \State Let $c_\inputt$ be an arbitrarily chosen vertex in $\tree_S$ such that $v_{\inputt} \in \vertices_{\graph}(c_\inputt)$.
            \State Join $\finalTreeAA(\tree_S)$ with input vertex $c_\inputt$. Obtain output vertex $c_\outputt \in \vertices(\tree_S)$.
            \State
            If $c_{\outputt}$ is a clique vertex, output $v_{\outputt} := $ the vertex $v \in \vertices_\graph(c_{\outputt})$ with the shortest distance to $v_{\inputt}$.
            \State Otherwise, $c_{\outputt}$ is an intermediate vertex. Let $c, c' \in \vertices(\tree_S)$ be the two vertices adjacent to $c_{\outputt}$ in $\tree_S$, and let $v_{\outputt}$ be the unique vertex in $\vertices_\graph(c) \cap \vertices_\graph(c')$. Output $v_{\outputt}$.            
    \end{algorithmic}
\end{dianabox}

\begin{theorem}\label{theorem:opt-block-graphs}
    Assume a protocol $\finalTreeAA(\tree)$ achieving $\approximateagreement$ on any tree $\tree$ within $R_{\finalTreeAA}(\diameter(\tree))$ rounds. Then, $\blockgraphAA(\graph)$ achieves $\approximateagreement$ on any connected block graph $\graph$ within $R_{\finalTreeAA}(4 \cdot \diameter(\graph))$ rounds.
\end{theorem}

We split the proof of \Cref{theorem:opt-block-graphs} into multiple lemmas. The lemma below establishes Validity.
\begin{lemma}\label{lemma:block-graphs:validity}
    Every honest party $\party$ obtains a valid output vertex.
\end{lemma} 
\begin{proof}
Let $c_{\inputt}\in\vertices(\tree_S)$ be the clique vertex corresponding to $\party$'s input (so $v_\inputt \in V_\graph(c_\inputt)$). By the Validity of $\finalTreeAA(\tree_S)$,  the output $c_{\outputt}$ obtained by $\party$ lies in the convex hull of the honest inputs in $\tree_S$. That is, $c_\outputt$ lies on a path in $\tree_S$ between two honest clique vertices $c_a$ and $c_b$ -- these represent two maximal cliques in $\graph$.
We split the proof into three cases, reflecting the output rule in the protocol. In the following, $\tree$ denotes the clique tree of $\graph$ from which $\tree_S$ is constructed.

\paragraph{Case 1: $c_{\outputt}$ is an intermediate vertex, i.e., $c_{\outputt} \in \vertices(\tree_S) \setminus \vertices(\tree)$.}
In this case, $c_{\outputt}$ subdivides an edge $\{c,c'\}$ of $\tree$. Party $\party$ outputs the unique vertex $v_{\outputt}$ in $\vertices_\graph(c) \cap \vertices_\graph(c')$. Since $c_\outputt$ lies on the unique path between $c_a$ and $c_b$ in $\tree_S$, the edge $e = \{c,c'\}$ must lie on the unique path between $c_a$ and $c_b$ in $\tree$. Removing the edge
$e$ separates $\tree$ into two connected components, $\tree_a$ (containing $c_a$) and $\tree_b$ (containing $c_b$). By \Cref{lemma:block-graphs:clique-separator}, this will imply that any path in $\graph$ from an input vertex in $c_a$ to an input vertex in $c_b$ must pass through $v_\outputt$. Thus, $v_\outputt$ lies on a shortest path between two honest inputs in $\graph$, and therefore lies in the convex hull of the honest inputs.

\paragraph{Case 2: $c_{\outputt}$ is a clique vertex, and $v_{\inputt} \in \vertices_\graph(c_{\outputt})$.} Here, $\party$ outputs $v_{\outputt} = v_{\inputt}$, which is trivially valid. 

\paragraph{Case 3: $c_{\outputt}$ is a clique vertex, but $v_{\inputt} \notin \vertices_\graph(c_{\outputt})$.}
Let $c := c_{\outputt}$. By the Validity guarantees of $\finalTreeAA(\tree_S)$, there are two honest inputs $c_a, c_b$ in $\tree_S$ so that $c$ lies on the unique path in $\tree_S$, and hence also in $\tree$, connecting $c_a$ and $c_b$. Then, let $c'$ be the neighbor of $c$ lying on the path from $c$ to $c_b$. Removing the edge $e = \{c, c'\}$ from $\tree$ partitions $\tree$ into two components: $\tree_a$ (containing $c_a$ and $c$) and $\tree_b$ (containing $c'$ and $c_b$). 
By \Cref{lemma:block-graphs:clique-separator} any path in $\graph$ from the honest inputs $v_a$ and $v_b$ contained by the cliques $c_a$ and $c_b$ must pass through a vertex $u \in \vertices_{\graph}(c) \cap \vertices_{\graph}(c').$ As $\vertices_{\graph}(c) \cap \vertices_{\graph}(c')$ only contains one vertex, $v_{\outputt} = u$. Consequently, $v_{\outputt}$ lies on the shortest path between two honest inputs, and therefore is valid.
\end{proof}

We now show that the $1$-Agreement property holds. Throughout the following, we use subscripts in the distance notation to distinguish between distances measured in the original graph $\graph$ and distances measured in the subdivided clique tree $\tree_S$.

\begin{lemma} \label{lemma:block-graphs:1-agreement}
    If two honest parties $\party$ and $\party'$ output $v_{\outputt}$ and $v'_{\outputt}$ respectively, $\distance_\graph(v_{\outputt}, v'_{\outputt}) \leq 1$.
\end{lemma}
\begin{proof}

Let $c_{\outputt}$ and $c'_{\outputt}$ be the outputs that parties $\party$ and $\party'$ obtain from running $\finalTreeAA(\tree_S)$.
By the $1$-Agreement property of $\finalTreeAA(\tree_S)$, we have $\distance_{\tree_S}(c_{\outputt}, c'_{\outputt}) \le 1.$
We split the analysis into two cases.

\paragraph{Case 1: $\distance_{\tree_S}(c_{\outputt}, c'_{\outputt}) = 0$.}
If $c_{\outputt}$ is an intermediate vertex, i.e., a subdivision vertex of some edge $\{c,c'\}$ of $\tree$, then both parties output the same vertex. Specifically, $\{v_{\outputt}\} = \{v'_{\outputt}\} = \vertices_\graph(c) \cap \vertices_\graph(c')$, implying $\distance_{\graph}(v_{\outputt}, v'_{\outputt}) = 0$.
Otherwise, $c_{\outputt}$ is a clique vertex. Then, both parties output vertices $v_{\outputt}, v'_{\outputt} \in \vertices_\graph(c_{\outputt})$. Since $\vertices_\graph(c_{\outputt})$ induces a clique in $\graph$, any two vertices in it are either identical or adjacent; therefore, $\distance_\graph(v_{\outputt}, v'_{\outputt}) \leq 1$.

\paragraph{Case 2: $\distance_{\tree_S}(c_{\outputt}, c'_{\outputt}) = 1$.}
By the definition of the subdivided clique tree
$\tree_S$, 
exactly one of $c_{\outputt}$ and $c'_{\outputt}$ is a clique vertex, and the other is an intermediate vertex.
Without loss of generality, assume that $c_{\outputt}$ is a clique vertex and $c'_{\outputt}$ is the intermediate vertex subdividing the edge $\{c_{\outputt}, c'\}$ of $\tree$, for some neighbor $c'$ of $c_{\outputt}$.
By the output rule for intermediate vertices, the party corresponding to
$c'_{\outputt}$ outputs the unique vertex $v'_{\outputt} \in V_{\graph}(c_{\outputt}) \cap V_{\graph}(c').$ In particular, $v'_{\outputt} \in V_{\graph}(c_{\outputt})$. Moreover, by the output rule for clique vertices, the party corresponding to
$c_{\outputt}$ outputs a vertex $v_{\outputt} \in V_{\graph}(c_{\outputt}).$
Since $V_{\graph}(c_{\outputt})$ induces a clique in $\graph$, any two vertices in this set are either identical or adjacent. Therefore,
$\distance_{\graph}(v_{\outputt}, v'_{\outputt}) \le 1.$
\end{proof}

The lemma below will enable us to prove the round complexity of $\blockgraphAA(\graph)$.
\begin{lemma} \label{block-graphs:clique-tree-diameter}
 Let $T$ be a minimum-diameter clique tree of a block graph $G$. Then $\diameter(\tree) \le 2\cdot \diameter(\graph)$.
\end{lemma}
\begin{proof}
If $G$ consists of a single maximal clique, then its clique tree $T$ consists of a single vertex, and hence $\diameter(\tree) = 0 \le 2\cdot \diameter(\graph)$. If $G$ contains exactly two maximal cliques, then $T$ consists of a single edge and thus $\diameter(\tree) = 1\le 2\cdot \diameter(\graph)$. For the remainder of the proof, we assume that $G$ contains at least three maximal cliques. In particular, $\diameter(G) \ge 2$. 

Let $\pathh := (c_0, c_1, \ldots, c_k)$ be a longest path in $T$, so that $k = \diameter(\tree)$. If $k \le 2$, then $\diameter(\tree) \le 2 \le 2\cdot \diameter(\graph)$, and the claim follows. Hence, we assume $k \ge 3$.

For every $i \in \{0, 1, \ldots, k - 1\}$, since $c_i$ and $c_{i + 1}$ are adjacent in $\tree$, we have $\vertices_\graph(c_i) \cap \vertices_\graph(c_{i +1}) \neq \emptyset$. As $\graph$ is a block graph, any two distinct maximal cliques intersect in at most one vertex. Therefore, for each $i$, there exists a unique vertex $v_i \in \vertices_\graph(c_i) \cap \vertices_\graph(c_{i+1})$. We then define the vertex sequence $S := (v_0, v_1, \ldots, v_{k-1})$.
We claim that none of the following cases can occur in the sequence $S$:
\begin{enumerate}
    \item There exists an index $i$ such that $\{v_i, v_{i+1}\} \notin E(\graph)$ and $\{v_{i+1}, v_{i+2}\} \notin E(\graph)$.
    \item  
    There exist indices $0 < i + 1 < j \le k - 1$ such that $v_i = v_j$.
\end{enumerate}
We first rule out case~(1).
Since both $v_i$ and $v_{i+1}$ belong to the clique $c_{i+1}$, that is, $\{v_i, v_{i+1}\}\subseteq \vertices_\graph(c_{i+1})$, we have $v_i = v_{i+1}$. This is because, in a clique, every two distinct vertices are adjacent, and if $\{v_i, v_{i+1}\} \notin E(\graph)$, this implies that $v_i = v_{i+1}$. Similarly, since both $v_{i +1}$ and $v_{i+2}$ belong to the clique $c_{i+2}$, i.e., $\{v_i, v_{i+1}\}\subseteq \vertices_\graph(c_{i+2})$, we have $v_{i + 1} = v_{i+2}$. As a result, we obtain $v = v_i = v_{i +1} = v_{i + 2}$. 

Consequently, the cliques $c_i$, $c_{i+1}$, $c_{i + 2}$, and $c_{i+3}$ all intersect in the same vertex $v$, forming a subpath of length three in $T$ whose cliques share a common intersection $\{v\}$. More generally, let $c_{i+1}, \ldots, c_{i+\ell}$ with $\ell > 2$ be a maximal subpath of $\tree$ such that $\cap_{r = i}^{r = i +\ell} \vertices_\graph(c_r) = \{v\}$. Since $k\ge 3$, such a subpath exists with $\ell > 2$.
We then replace this subpath by a star centered at $c_i$, where $c_{i+1}, \ldots, c_{i+\ell}$ become adjacent to $c_i$, while preserving their original neighbors in $\tree$ outside the set  $\{c_{i + 1}, \ldots, c_{i+\ell}\}$. The running intersection property is preserved because all these cliques contain the common vertex $v$, and for any two cliques among $c_i, \ldots, c_{i+\ell}$ the unique clique on the path between them in the modified tree is $c_i$, which also contains $v$. Hence, every clique on the path between any two cliques containing $v$ continues to contain $v$. This transformation
strictly decreases the diameter of $\tree$, contradicting the assumption that $\tree$ is a minimum-diameter clique tree of $\graph$. Hence, $v_i \neq v_{i+1}$ for all $i$ for which $v_{i + 1} = v_{i + 2}$. Thus, case (1) cannot occur.
This implies that in the sequence $S$ no three consecutive vertices can be identical.

We now rule out case (2). Suppose that there exist indices $i$ and $j$ with  $i + 1< j$ such that $v_i = v_j = v$. Then $v \in \vertices_\graph(c_i) \cap \vertices_\graph(c_{i + 1})$ and $v \in \vertices_\graph(c_j) \cap \vertices_\graph(c_{j + 1})$. The running intersection property of clique trees implies that $v$ belongs to every clique on the path from $c_i$ to $c_{j + 1}$ in $\tree$. Let us first consider the case where $v_{i} \neq v_{i + 1}$. Since both $v_i$ and $v_{i + 1}$ belong to $c_{i + 1}$ and $c_{i + 2}$, we have that $\{v_i, v_{i + 1}\} \subseteq  \vertices_\graph(c_{i + 1})\cap \vertices_\graph(c_{i + 2})$. 
This implies that $|\vertices_\graph(c_{i + 1})\cap \vertices_\graph(c_{i + 2})| \ge 2$, which contradicts the assumption that $G$ is a block graph, as in a block graph any two distinct maximal cliques intersect in at most one vertex. 

It remains to consider the case where $v_{i} = v_{i + 1}$. If $v_{i + 1} \neq v_{i + 2}$, then we may follow the same argument as above: replacing $v_i$ by $v_{i +1}$, we obtain that both $v_{i +1}$ and $v_{i +2}$ belong to $c_{i +2}$ and $c_{i +3}$, and hence $\{v_{i +1}, v_{i +2}\} \subseteq \vertices_\graph(c_{i+2}) \cap \vertices_\graph(c_{i+3})$, which again implies that $|\vertices_\graph(c_{i+2}) \cap \vertices_\graph(c_{i+3})| \ge 2$. This contradicts the assumption that $\graph$ is a block graph. On the other hand, if $v_{i + 1} = v_{i + 2}$, then this is exactly case~(1), which we have already shown cannot occur. Therefore, there exist no indices $0 < i + 1 < j \le k - 1$ such that $v_i = v_j$.

Let us construct a sequence $\pathh_\graph$ in $\graph$ using $S$. For every $i \in \{0, 1, \ldots, k - 1\}$, if $v_i \neq v_{i + 1}$, we add $v_i$ to $\pathh_\graph$, and otherwise we skip it. Note that if $v = v_i = v_{i + 1}$ and we skip $v_i$, then by the conclusion of case~(1) $v_{i + 1} \neq v_{i + 2}$, and therefore $v_{i + 1} = v$ will be added to $\pathh_\graph$.  We claim that $\pathh_\graph$ is a path in $\graph$. This is because for every index $i$ with $v_i \neq v_{i + 1}$, the vertices $v_i$ and $v_{i + 1}$ both belong to a maximal clique, and hence they are adjacent in $\graph$, i.e., $\{v_i, v_{i + 1}\} \in E(\graph)$. It follows that consecutive vertices in $\pathh_\graph$ are adjacent in $\graph$. Moreover, by Case~(2), the sequence $S$ contains no two vertices $v_i$ and $v_j$ with $i+1 < j$ such that $v_i = v_j$, and hence the sequence $\pathh_\graph$ contains no repeated vertex. Therefore, $\pathh_\graph$ forms a path in $\graph$. By the construction of $\pathh_\graph$, each vertex of $\pathh_\graph$ can appear at most twice in $S$, and hence $|S| \le 2\cdot |\pathh_\graph|$. In particular, since $|S| = k - 1 = \diameter(\tree) - 1$, we obtain $\diameter(\tree) \le 2\cdot |\pathh_\graph|$.

Since $\pathh_\graph$ is a path in $\graph$, it follows that $|\pathh_\graph| \le \diameter(\graph)$. Combining this with $\diameter(\tree) \le 2\cdot |\pathh_\graph|$, we obtain $\diameter(\tree) \le 2\cdot \diameter(\graph)$.
\end{proof}

We now present the proof of \Cref{theorem:opt-block-graphs}.
\begin{proof}[Proof of \Cref{theorem:opt-block-graphs}]
To compute $\tree_S$, the parties use the deterministic algorithm 
presented in~\cite{CliqueTreesAlg}, which provides them with the same minimum-diameter clique tree $\tree$ of $\graph$.
Afterwards, they execute $\finalTreeAA(\tree_S)$. Then, the parties complete the execution of $\finalTreeAA(\tree_S)$ within $R_{\finalTreeAA}(\diameter(\tree_S))$ rounds. Therefore, they obtain outputs in $\blockgraphAA(\graph)$ within $R_{\finalTreeAA}(\diameter(\tree_S))$ rounds, hence Termination holds.
As $\diameter(\tree_S) \leq 2 \cdot \diameter(\tree)$ by construction, and $\diameter(\tree) \leq 2\cdot \diameter(\graph)$ by \Cref{block-graphs:clique-tree-diameter}, we obtain that $\blockgraphAA(\graph)$ has round complexity $R_{\finalTreeAA}(4 \cdot \diameter(\graph))$, as claimed.
    \Cref{lemma:block-graphs:validity} and \Cref{lemma:block-graphs:1-agreement} guarantee that Validity and $1$-Agreement hold, concluding the proof.
\end{proof}

We obtain our final corollaries for block graphs by instantiating $\finalTreeAA(\tree_S)$ with the protocol described in \Cref{cor:tree-aa} in unauthenticated settings (with no cryptographic assumptions), and with the protocol of \Cref{cor:pki-tree-aa} for authenticated settings (assuming PKI and digital signatures). 
Note that, by \Cref{thm:lower-bound}, we have obtained asymptotically optimal round complexity whenever $t \in \Theta(n)$.

\begin{corollary}
    There is a synchronous protocol that achieves $\approximateagreement$ on any connected block graph $\graph$ in $O\left(\tfrac{\log \diameter(\graph)}{\log\log \diameter(\graph)}\right)$ rounds given that up to $t < n/3$ of the $n$ parties involved are Byzantine.
\end{corollary}

\begin{corollary}
    There is a synchronous protocol that achieves $\approximateagreement$ on any connected block graph $\graph$ in $O\left(\tfrac{\log \diameter(\graph)}{\log\log \diameter(\graph)}\right)$ rounds in authenticated settings given that up to $t = c \cdot n$ of the $n$ parties are Byzantine, where $c$ is a constant in $[1/3, 1/2)$.
\end{corollary}

Moreover, we note that $\blockgraphAA(\graph)$ does not depend on the network being synchronous -- it achieves $\approximateagreement$ on $\graph$ whenever the underlying $\finalTreeAA(\tree_S)$ protocol achieves $\approximateagreement$ on the subdivided clique tree $\tree_S$. By instantiating $\finalTreeAA(\tree_S)$ with the asynchronous protocol of \cite[Theorem 5]{DISC:NoRy19}, we obtain another corollary, for the asynchronous model. We recall that the asynchronous model only assumes that messages get delivered eventually.

\begin{corollary}
    There is an asynchronous protocol that solves $\approximateagreement$ on any connected block graph $\graph$ given that up to $t < n/3$ of the $n$ parties involved are Byzantine.
\end{corollary}
\section{Conclusions}

In this work, we have investigated the optimal round complexity for achieving $\approximateagreement$ on trees in the synchronous model.
Our results extend previous findings from real-valued domains to tree-structured input spaces, providing novel insights for achieving efficient $\approximateagreement$ in discrete input spaces.
We presented a protocol with round complexity $O\left( \frac{\log \diameter(\tree)}{\log \log \diameter(\tree)} \right)$, improving upon the previously best-known $O(\log \diameter(\tree))$-round protocol of \cite{DISC:NoRy19}.  Our construction achieves optimal resilience and relies on a reduction from $\approximateagreement$ on trees to its real-valued counterpart.
Moreover, we have provided an asymptotically matching lower bound
by adapting the lower bound of \cite{Fekete90} from real values to graphs, demonstrating that our protocol achieves asymptotically optimal round complexity whenever $t \in \Theta(n)$. 

We further extend our techniques to block graphs by exploiting their clique tree structure. By reducing $\approximateagreement$ on a block graph to $\approximateagreement$ on a suitably constructed tree, we obtain efficient synchronous and asynchronous protocols with optimal resilience, with asymptotically optimal round complexity in the synchronous model whenever $t \in \Theta(n)$.

In contrast to prior work on $\approximateagreement$ outside the real line, our protocols rely only on efficient local computation and avoid the exponential-time safe-area constructions.

Our results suggest several directions for further investigation. On the practical side, improving the constants in the round complexity (particularly those inherited from real-valued $\approximateagreement$) would directly improve the efficiency of our protocols for trees and block graphs. On the theoretical side, it remains open whether similarly optimal-resilience efficient protocols can be obtained for broader classes of graphs, and where the precise structural boundary lies between $\approximateagreement$ and $\convexagreement$ in terms of resilience. More broadly, we view reduction-based techniques such as ours as a promising approach for efficient $\approximateagreement$ across diverse input spaces.

\newpage

\newpage
\bibliographystyle{plain}
\bibliography{bib/abbrev3,bib/crypto,bib/project}

\newpage
\appendix
\section*{Appendix}

\section{Protocol for trees: Additional Proofs} \label{appendix:realvalues-remarks}

We include the proof of the second part of \cref{remark:combined}. 
\begin{proof}
    Assume without loss of generality that $j < j'$. Since $j' - j \le 1$, either there is an integer $z$ such that $z \leq j, j', \leq z + 1$ or there is an integer $z$ such that $z - 1 \leq j \leq z \leq j' \leq z + 1$. In the first case, both $\closestInt{j}$ and $\closestInt{j'}$ are either $z$ or $z + 1$ by the definition of $\closestInt{\cdot}$. In the second case, assume that $\closestInt{j} = z - 1$, and $\closestInt{j'} = z + 1$. Then, $j - (z - 1) < z - j$, hence $j < z - \frac{1}{2}$,
    while $(z + 1) - j' \leq j' - z$, and therefore $j' \geq z + \frac{1}{2}$. This leads to a contradiction, as it implies $j' - j > 1$. Hence, in both cases, $\closestInt{j'} - \closestInt{j} \leq 1$.
\end{proof}

\section{$\realAA$ in the $t < n/3$ setting}\label{appendix:realvalues-aa}
As mentioned in Section \ref{section:warmup}, the analysis of \cite{BenDoHo10} regarding $\realAA$ proves that $\approximateagreement$ is achieved for $\varepsilon = 1/n$. Theorem \ref{theorem:real-values-aa} extends the analysis for any $\varepsilon > 0$.

In order to prove Theorem \ref{theorem:real-values-aa}, we make use of a few results in the full version \cite{BDH10} of \cite{BenDoHo10}.
We first need to note the round complexity of each iteration in $\realAA$. This remark follows directly from \cite[Theorem~1]{BenDoHo10}, which discusses the guarantees of the mechanism the parties use to distribute their current values in each iteration.
\begin{remark}[Theorem 1 of \cite{BDH10}]\label{remark:rounds-per-iteration}
    Each iteration of $\realAA(\varepsilon)$ takes three rounds.
\end{remark}

Our proof also utilizes the two claims stated below: these are concerned with $\varepsilon$-Agreement and Validity. In the following, $V_{R}$ denotes the multiset of values held by the honest parties at the end of iteration $R$ (hence, round $3 \cdot R$), and $V_0$ denotes the multiset of honest inputs.

\begin{lemma}[Claim 12 of \cite{BDH10}]
    \label{lmm:Claim12}
    In every iteration $R$, $(\max V_R - \min V_{R}) \leq (\max V_0 - \min V_0) \cdot \frac{t^R}{R^R \cdot (n - 2t)^R}$. 
\end{lemma}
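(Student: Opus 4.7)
The plan is to reconstruct the argument of \cite{BDH10}. The key quantity to track is, for each iteration $i$, the number $t_i$ of byzantine parties whose inconsistent behavior is first detected by honest parties during iteration $i$. Since the value-distribution subroutine of $\realAA$ guarantees that once a byzantine party is caught it is permanently ignored in all later iterations, we have $\sum_{i=1}^{R} t_i \le t$.

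The heart of the proof is a per-iteration shrinkage claim bounding $\max V_i - \min V_i$ in terms of $\max V_{i-1} - \min V_{i-1}$ and $t_i$. To establish it, I would pick two honest parties $\party$ and $\party'$ and compare the trimmed means they compute in iteration $i$. Each party discards the $t$ smallest and $t$ largest of the values it receives and averages the remaining $n-2t$. Any byzantine party that sends the \emph{same} value to $\party$ and $\party'$ either appears in both of their trimmed sets (contributing identically to both averages) or in neither -- so it cannot contribute to the disagreement between the two averages. The residual disagreement is therefore attributable to the $t_i$ byzantine parties that actually send different values to $\party$ and $\party'$ in iteration $i$; a careful case analysis, which I expect to be the main technical obstacle, shows that each such byzantine party can shift the difference of the two averages by at most $\frac{1}{n-2t}\cdot(\max V_{i-1} - \min V_{i-1})$, yielding a per-iteration shrinkage of
\[
\max V_i - \min V_i \le \frac{t_i}{n-2t}\cdot(\max V_{i-1} - \min V_{i-1}).
\]
Formalizing which byzantine contributions survive the trimming is the delicate part: it uses the fact that the value-distribution subroutine of $\realAA$ tags every value with evidence, so a value kept by both honest parties after trimming must be the same real number, which is precisely what allows the per-iteration factor to be $t_i/(n-2t)$ instead of the $1/2$ factor of a naive iterative averaging scheme.

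Once the per-iteration claim is in hand, the proof concludes by iteration and a convexity argument. Telescoping over $R$ iterations gives
\[
\max V_R - \min V_R \le (\max V_0 - \min V_0) \cdot \prod_{i=1}^{R} \frac{t_i}{n-2t}.
\]
It then remains to upper bound $\prod_{i=1}^{R} t_i$ subject to $\sum_{i=1}^{R} t_i \le t$ and $t_i \ge 0$. By the AM--GM inequality the product is maximized when all the $t_i$ are equal to $t/R$, yielding $\prod_{i=1}^{R} t_i \le (t/R)^R = t^R/R^R$, and substituting produces the claimed bound $(\max V_0 - \min V_0)\cdot \tfrac{t^R}{R^R(n-2t)^R}$.
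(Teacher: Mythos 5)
The paper does not prove this lemma at all: it is imported verbatim as Claim 12 of the full version \cite{BDH10} of \cite{BenDoHo10}, so there is no in-paper argument to compare against. Your reconstruction follows the standard skeleton of that result, and it matches the informal description the paper gives in Section 4 (detected equivocators are ignored forever, per-iteration shrinkage factor $t_i/(n-2t)$, hence $\sum_i t_i \le t$). Your endgame is correct: telescoping the per-iteration bound and applying AM--GM to get $\prod_i t_i \le (t/R)^R$ under $\sum_i t_i \le t$ yields exactly the claimed $\frac{t^R}{R^R (n-2t)^R}$ factor.

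That said, the entire content of the claim sits in the per-iteration inequality, and the justification you offer for it has a concrete hole. You account only for the direct contribution of the $t_i$ equivocating parties to the difference of the two trimmed means, but when a byzantine party sends different values to $\party$ and $\party'$, it also perturbs \emph{which honest values} each party discards: the two parties' trimmed multisets can then differ in positions held by honest (or consistently-behaving) senders, not just in the $t_i$ byzantine positions. The standard repair is a separate combinatorial lemma: if two multisets of $n$ values, each contained in $[\min V_{i-1}, \max V_{i-1}]$, differ in at most $t_i$ elements, then their $t$-trimmed means differ by at most $\frac{t_i}{n-2t}(\max V_{i-1}-\min V_{i-1})$. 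You would need to state and prove that lemma, and also pin down the property of the value-distribution primitive that makes your bookkeeping consistent: namely that if $\party$ and $\party'$ accept different values from some sender $q$ in iteration $i$, then $q$ is (newly) detected in iteration $i$ and ignored by \emph{all} honest parties from iteration $i+1$ on, so that the per-pair disagreement count in iteration $i$ is really bounded by the global $t_i$ and the $t_i$ sum to at most $t$. As written, your argument is a correct outline with the decisive step acknowledged but not supplied.
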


\begin{lemma}[Claim 8 of \cite{BDH10}] \label{lemma:claim8}
   In every iteration $R$, $V_R \subseteq [\min V_0, \max V_0]$.
\end{lemma}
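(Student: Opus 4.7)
\medskip

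\noindent\textbf{Proof plan for Lemma~\ref{lemma:claim8}.} The plan is to proceed by induction on $R$. The base case $R = 0$ is immediate since $V_0 \subseteq [\min V_0, \max V_0]$ trivially. For the inductive step, I assume $V_R \subseteq [\min V_0, \max V_0]$ and aim to show that every value computed by an honest party in iteration $R+1$ lies in the same interval. Since $\realAA$ has each honest party form its new value as the average of the $n - 2t$ values that remain after trimming the top $t$ and bottom $t$ of its received multiset, it suffices to show that every value surviving the trim is at least $\min V_0$ and at most $\max V_0$; the interval $[\min V_0, \max V_0]$ is convex, so any convex combination (in particular the average) stays inside.

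The core step is the standard trim argument. Fix an honest party and a value $v$ that survives the trimming. By definition, $v$ has rank at least $t+1$ and at most $n-t$ in the sorted multiset of received values. Look at the $t+1$ smallest received values; since at most $t$ of the $n$ senders are byzantine, at least one of these positions is occupied by an honest sender's value $h_{\min} \in V_R$, and $h_{\min} \leq v$ by the ranking. By the inductive hypothesis $h_{\min} \geq \min V_0$, so $v \geq \min V_0$. A symmetric argument using the $t+1$ largest received values yields $v \leq \max V_0$. Hence every kept value lies in $[\min V_0, \max V_0]$, and so does its average.

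The main thing to be careful about is not conflating \emph{honest values in iteration $R$} (which populate $V_R$) with \emph{values received by a particular party}: the adversary may deliver arbitrary numbers for the byzantine slots, so the interval bound on received values is \emph{not} a priori valid, and the whole point of the trim step is to reintroduce validity by sandwiching each surviving entry between two honest entries. No other step in the proof is subtle once this pigeonhole observation is in place, so I do not expect any serious obstacle beyond stating the induction cleanly.
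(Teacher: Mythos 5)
Your argument is the standard validity argument for trim-and-average protocols, and it is correct as far as it goes; but note that the paper does not prove this statement at all. Lemma~\ref{lemma:claim8} is imported as a black box (Claim~8 of the full version \cite{BDH10} of \cite{BenDoHo10}), exactly like the convergence bound in Lemma~\ref{lmm:Claim12}, so there is no in-paper proof to match. What you have written is an induction-plus-pigeonhole proof for the \emph{schematic} description of $\realAA$ given in Section~\ref{section:warmup}: each party receives $n$ values, discards the lowest $t$ and highest $t$, and averages the rest. For that idealized protocol your sandwiching argument (any surviving value has an honest value below it among the $t+1$ smallest and an honest value above it among the $t+1$ largest) is complete and correct, and it is indeed the argument underlying the cited claim.

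The caveat is that the actual protocol of \cite{BenDoHo10} is not literally this schematic: each iteration spans three rounds, uses a value-distribution mechanism in which parties detect and permanently blacklist byzantine senders, and the multiset a party averages is therefore not simply the raw multiset of $n$ received values. Your pigeonhole step silently assumes that, in every iteration, at most $t$ of the values entering the trim are adversarial and that at least that many are discarded from each end; verifying this for the real mechanism (after blacklisting, with possibly fewer than $n$ values present) is precisely the content of Claim~8 in \cite{BDH10}. So your proof is a faithful reconstruction of the idea, but if you wanted it to stand in place of the citation you would have to carry it out against the actual three-round iteration of $\realAA$, not against the one-line summary of it.
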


We are now ready to prove Theorem \ref{theorem:real-values-aa}, which we restate below.

\RealValuesAA*
\begin{proof}
    By the Lemma \ref{lmm:Claim12}, we know that the maximum absolute difference between the output values of any two honest parties is at most 
    $
        \frac{\diameter \cdot t^R}{(n-2t)^R \cdot R^R}
    $
    after $R$ iterations. Since $n > 3t$, this is at most $\frac{\diameter}{R^R}$. We will now show that for $R := \lceil \frac{20}{9} \cdot \frac{\log \delta}{\log \log \delta} \rceil$, where $\delta := \diameter/\varepsilon$, we obtain 
    $\frac{\diameter}{R^R} \leq \varepsilon$
    and thus $\varepsilon$-Agreement holds after at most $R$ iterations. Below we provide a lower bound for $\log_2(R^R)$:    
    \begin{align*}
        \log_2(R^R) &= R \log_2 R \\
        &\geq \frac{20}{9} \cdot \frac{\log_2 \delta}{\log_2 \log_2 \delta} \cdot (\log_2 \log_2 \delta - \log_2 \log_2 \log_2 \delta) \\
        &\geq \frac{20}{9} \cdot\frac{\log_2 \delta}{\log_2 \log_2 \delta} \cdot (\log_2 \log_2 \delta - \frac{11}{20} \cdot \log_2 \log_2 \delta) \\
        &=\log_2 \delta \\
        &= \log_2\left( \frac{\diameter}{\varepsilon} \right).
    \end{align*}
    By exponentiation on both sides (with base $2$), we obtain that $R^R \geq \frac{\diameter}{\varepsilon}$, and therefore our claim $\frac{\diameter}{R^R} \leq \varepsilon$ holds, which proves that $\varepsilon$-Agreement is reached.
    Note that, in the third line, we use the fact that since the function $\log_2(x)/x$, defined for $x > 0$, has a global maximum (at $x = e$) of $\frac{1}{e \cdot \ln(2)} < 11/20$ and therefore also $\log_2(\log_2\log_2(x))/\log_2(\log_2(x)) < 11/20$ for all $x > 0$.

    As each of the iterations of $\realAA$ requires three rounds according to \Cref{remark:rounds-per-iteration}, the protocol terminates after at most
    $
         R_{\realAA}(\diameter, \varepsilon) \leq 3 \cdot R < 7 \cdot \frac{\log_2 (\diameter/\varepsilon)}{\log_2 \log_2 (\diameter/\varepsilon)} + 3
    $
    rounds. In addition, Validity holds due to Lemma \ref{lemma:claim8}, hence $\approximateagreement$ is achieved.
\end{proof}

Below we present the proof of \Cref{remark:convex-hull-maintained}.
\ObviousHullMaintained*
\begin{proof}
    Let $v$ be an arbitrary vertex in $S$. If $v \in \vertices(\pathh)$, then $v = \projection_{\pathh}(v)$ and the statement follows immediately.
    The remainder of the proof is therefore concerned with the case where $v \notin \vertices(\pathh)$.
    
    Let $v_h$ be a vertex in $\vertices(\pathh) \cap \hull{S}$. Note that all vertices on the path $\pathh(v, v_h)$ are in $\hull{S}$ since both $v$ and $v_h$ are in $\hull{S}$.
    We label the vertices in $\pathh(v, v_h)$ as $(v_1 := v, v_2, \ldots, v_m := v_h)$.
    As we know that $v_1$ is not on $\pathh$ but $v_h$ is on $\pathh$, there is an $i$ satisfying $1 < i \leq h$ such that $v_i \in \vertices(\pathh)$ and $v_1, \ldots, v_{i-1} \not \in \vertices(\pathh)$.

    Then, $v_i$ must be the projection of $v$ onto $\pathh$, i.e., $v_i = \projection_{\pathh}(v)$. Assuming otherwise implies that either there is another vertex within $v_1, \ldots, v_{i-1}$ that is in $\vertices(\pathh)$, which contradicts the way we have defined $i$, or that the tree $\tree$ contains a cycle. Therefore, because $v_i \in \vertices(\pathh)$ is the projection of $v$ onto $\pathh$, it follows by our previous observation $v_i \in \hull{S}$ that $\projection_{\pathh}(v) \in \vertices(\pathh) \cap \hull{S}$.
\end{proof}

\section{$\realAA$ in the $t < n/2$ setting} \label{appendix:real-values-pki}

In this section, we focus on $\approximateagreement$ in the authenticated setting, where we can handle up to $t < n / 2$ Byzantine parties. \Cref{theorem:real-values-aa} already handles the range $t < n / 3$, hence in the following we only focus on the remaining range, i.e., $t$ satisfying $n / 3 \leq t < n / 2$. 
We discuss the proof of \Cref{theorem:realAA-pki}, restated below.
\RealAAPKI*

As mentioned in the main body of the paper, we obtain this result by combining two protocols: the Proxcensus protocol of \cite{EUROCRYPT:GhGoLi22}, which will be used when $D / \varepsilon$ is large, and the $\approximateagreement$ protocol of \cite{PODC:GhLiWa22}, which will be used otherwise.

\paragraph{When $D/\varepsilon$ is large.}
As previously mentioned, we rely on the protocol of \cite{EUROCRYPT:GhGoLi22}. This protocol implements, in fact, \emph{Binary Proxcensus} -- the parties join with inputs in $\{0, 1\}$. For a given integer $\ell \geq 1$, the parties output $1$-close integer values in $\{0, 1, \ldots, \ell\}$ such that: if all honest parties have input $0$, all honest parties output $0$, and if all honest parties have input $1$, all honest parties output $1$. In the protocol of \cite{EUROCRYPT:GhGoLi22}, the parties first \emph{expand} their inputs from $\{0, 1\}$ to $\{0, \ell\}$: input $0$ remains $0$, and input $1$ becomes $\ell$. Afterwards, the parties run a series of iterations where they compute new values with the guarantee that (i) honest parties' new values are in the range of the honest parties' values at the beginning of the iteration, (ii) the range of honest values is reduced with each iteration. We note that the new values are computed as a \emph{rounded} average (so that parties hold integer values in each iteration), but removing the rounding maintains this guarantee.
Then, this is essentially an $\approximateagreement$ protocol on $[0, \ell]$: the only adjustments that we need to make are (i) allowing parties to join directly with inputs from a bounded interval instead of mapping binary inputs to $\{0, \ell\}$, (ii) removing the rounding in the new values' computation, and (iii) fixing the number of iterations based on our use-case. We will see that, if honest parties hold $\diameter$-close inputs and we want their outputs to be $\varepsilon$-close, $O\left( \frac{\log (D/\varepsilon)}{\log \log (D / \varepsilon)} \right)$ iterations suffice whenever $t = c \cdot n$ for some constant $c \in [1/3, 1/2)$. More precisely, $\lceil \frac{20}{9 \cdot c'} \cdot \frac{\log_2 \delta}{\log_2 \log_2 \delta} \rceil$ iterations, where $c'  :=  \frac{1 - 2c}{c}$ and $\delta := \big(\diameter/\varepsilon\big)^{c'}$, will suffice. In the following, we refer to the adjusted protocol as $\realAA^\pki_1$, and we prove the following statement.
\begin{theorem}\label{thm:proxcensus-to-AA}
 If the honest inputs are $\diameter$-close real values, there is a protocol $\realAA^\pki_1(\diameter, \varepsilon)$ achieving $\approximateagreement$ even when up to $t = c \cdot n$ (where $c$ is a constant in $[1/3, 1/2)$) of the $n$ parties are Byzantine, ensuring termination within  
    \[R_{\realAA^\pki_1}(\diameter, \varepsilon)
    < 
    7 \cdot\frac{\log_2(\diameter/\varepsilon)}{\log_2 \big((1 - 2c)/c\big)+ \log_2\log_2(\diameter/\varepsilon)} + 3
    \]
    rounds. 
\end{theorem}

Note that, $\log_2 \big((1 - 2c)/c\big)$ is non-positive for our chosen range of $c \in [1/3, 1/2)$. However, the denominator is non-positive only when $\diameter / \varepsilon$ is a constant -- in this case, $\realAA^\pki$ will make use of the protocol of \cite{PODC:GhLiWa22} instead: this protocol achieves Termination within $O(\diameter / \varepsilon) = O(1)$ rounds.

We first note that each iteration requires three rounds of communication, as stated in \cite{EUROCRYPT:GhGoLi22}.
\begin{remark}\label{remark:rounds-per-iteration-new}
    Each iteration of $\realAA^\pki_1(\varepsilon)$ takes three rounds.
\end{remark}

Then $V_R$ denotes the multiset of values obtained by the honest parties in iteration $R$, and $V_0$ denotes the multiset of honest inputs. 
The lemma below establishes that Validity is maintained at all times.
\begin{lemma}[Lemma 4 of \cite{EUROCRYPT:GhGoLi22}] \label{lemma:claim8-new}
    After $R$ iterations, $V_R \subseteq [\min V_0, \max V_0]$.
\end{lemma}

The following lemma follows from Lemma 6 of \cite{EUROCRYPT:GhGoLi22}. We note that this lemma highlights the need for the constant $c < 1/2$, as setting $n := 2t + 1$ instead would mean that the honest values are not guaranteed to converge unless $R > t$.
\begin{lemma}[Lemma 6 in \cite{EUROCRYPT:GhGoLi22}]
    \label{lmm:Claim12-new}
    After $R$ iterations, $\max V_R - \min V_{R} \leq (\max V_0 - \min V_0) \cdot \frac{t^R}{R^R \cdot (n - 2t)^R}$. 
\end{lemma}

The lemma below discusses the number of iterations that are sufficient for $\varepsilon$-Agreement to be achieved. Recall that $t = c \cdot n$, where $c \in [1/3, 1/2)$ is a constant, and an upper bound on the honest inputs' range size $\diameter$, i.e., $\diameter \geq \max V_0 - \min V_0$, is assumed to be known. The proof will be similar to that of $\realAA$ presented in the proof of \Cref{theorem:real-values-aa}.
\begin{lemma} \label{lemma:sufficient-iterations-pki}
    Let  $R := \lceil \frac{20}{9 \cdot c'} \cdot \frac{\log_2 \delta}{\log_2 \log_2 \delta} \rceil$, where $\delta := \big(\diameter/\varepsilon\big)^{c'}$ and $c' := \frac{1 - 2c}{c}$. Then, 
    \begin{align*}
        \max V_{R} - \min V_{R} \leq \varepsilon.
    \end{align*}
\end{lemma}
\begin{proof}
    By \Cref{lmm:Claim12-new} we have the upper bound upper bound $\max V_{R} - \min V_{R } \leq \diameter \cdot \frac{t^R}{(n-2t)^R \cdot R^R}$. As $t = c \cdot n$, we can make the following simplification: $
    \frac{t^R}{(n-2t)^R}
    \leq \frac{(c \cdot n)^R}{\left( (1 - 2c) \cdot n \right)^R} = \frac{c^R}{(1 - 2c)^R}$.
    
    Let $c' :=  \frac{1 - 2c}{c}$.  This yields that $\frac{\diameter \cdot t^R}{(n-2t)^R \cdot R^R}$ is at most $\frac{\diameter}{c'^R R^R}$. We claim that choosing  $R := \lceil \frac{20}{9 \cdot c'} \cdot \frac{\log_2 \delta}{\log_2 \log_2 \delta} \rceil$, where $\delta := \big(\diameter/\varepsilon\big)^{c'}$, leads to $\frac{\diameter}{c'^R R^R} \leq \varepsilon$
    and thus $\varepsilon$-Agreement holds after at most $R$ iterations. Below we provide a lower bound for $\log_2\left((c'\cdot R)^R \right)$. 
    \begin{align*}
        \log_2\left((c'\cdot R)^R \right) &= R \log_2 (c' \cdot R) \\
        &\geq \frac{1}{c'} \cdot \frac{20}{9} \cdot \frac{\log_2 \delta}{\log_2 \log_2 \delta} \cdot \log_2\left( \frac{20}{9} \cdot \frac{\log_2 \delta}{\log_2 \log_2 \delta} \right) \\
        &\ge \frac{1}{c'} \cdot \frac{20}{9} \cdot \frac{\log_2 \delta}{\log_2 \log_2 \delta} \cdot (\log_2 \log_2 \delta - \log_2 \log_2 \log_2 \delta)\\
        &\geq \frac{1}{c'} \cdot\frac{20}{9} \cdot\frac{\log_2 \delta}{\log_2 \log_2 \delta} \cdot (\log_2 \log_2 \delta - \frac{11}{20} \cdot \log_2 \log_2 \delta) \\
        & \ge \frac{1}{c'}\cdot \log_2 \delta\\
        &= \log_2\left( \frac{\diameter}{\varepsilon} \right).
    \end{align*}
    By exponentiation on both sides (with base $2$), we obtain that $(c' \cdot R)^R \geq \frac{\diameter}{\varepsilon}$, and therefore our claim follows by the following lines: 
    \begin{align*}
        \max V_{R} - \min V_{ R} \leq \diameter \cdot \frac{t^R}{(n-2t)^R \cdot R^R} \leq \frac{\diameter}{c'^R \cdot R^R} \leq \frac{\varepsilon \cdot c'^R \cdot R^R}{c'^R \cdot R^R} = \varepsilon
    \end{align*}
    The other simplifications used here are the same as we used in \Cref{theorem:real-values-aa}.
\end{proof}

We may now present the proof of \Cref{thm:proxcensus-to-AA}. The round complexity analysis will be similar to that of $\realAA$ presented in the proof of \Cref{theorem:real-values-aa}.
\begin{proof}[Proof of \Cref{thm:proxcensus-to-AA}]
    As $\realAA^\pki_1$ runs for a fixed number of iterations $R := \lceil \frac{20}{9 \cdot c'} \cdot \frac{\log_2 \delta}{\log_2 \log_2 \delta} \rceil$, where $\delta := \big(\diameter/\varepsilon\big)^{c'}$ and $c' := \frac{1 - 2c}{c}$, Termination is guaranteed. Since each of the iterations of $\realAA^\pki_1$ requires three rounds according to \Cref{remark:rounds-per-iteration-new}, the protocol terminates after at most
    \begin{align*}
        R_{\realAA^\pki_1}(\diameter, \varepsilon) \leq 3 R <  7 \cdot \frac{\log_2 (\diameter/\varepsilon)}{\log_2 c' + \log_2 \log_2 (\diameter/\varepsilon)} + 3
    \end{align*}
        rounds. As $c=O(1)$, also $c'$ is a constant and hence the asymptotic runtime is as described in the theorem's statement. In addition, \Cref{lemma:sufficient-iterations-pki} ensures that $\varepsilon$-Agreement holds, and Validity holds due to Lemma \ref{lemma:claim8-new}, hence $\approximateagreement$ is achieved.
\end{proof}

\paragraph{When $D/\varepsilon$ is small.} If $\log_2 \big((1-2c)/c\big)+ \log_2\log_2(\diameter/\varepsilon)  \leq 0$ (and hence $\diameter / \varepsilon$ is a constant), we make use of the $\approximateagreement$ protocol of \cite{PODC:GhLiWa22}, which we denote by $\realAA^\pki_2$. This protocol is designed in the \emph{network-agnostic model}: given $t_s, t_a$ such that $t_a \leq t_s$ and $2 \cdot t_s + t_a < n$, $\realAA^\pki_2$ tolerates up to $t_s$ Byzantine corruptions when running in a synchronous network, and up to $t_a$ corruptions when running in an asynchronous network. By setting $t_a := 0$, this becomes a synchronous $\approximateagreement$ protocol resilient against $t = t_s < n / 2$ Byzantine corruptions. $\realAA^\pki_2$ follows the outline described in our paper's introduction, halving the honest values' range with each iteration. Each iteration takes four rounds of communication, hence $\approximateagreement$ is achieved within $O(\log (D / \varepsilon))$ rounds, which in this case is a constant number of rounds.
\begin{theorem}[Theorem 3.1 of \cite{PODC:GhLiWa22}]\label{thm:pki-AA-pretty}
 Given that the honest inputs are $\diameter$-close real values, there is a protocol $\realAA^\pki_2(\diameter, \varepsilon)$ achieving $\approximateagreement$ even when $t < n/2$ of the $n$ parties are Byzantine, ensuring Termination within $R_{\realAA^\pki_2}(\diameter, \varepsilon) = 4 \cdot  \lceil \log_2 (\diameter / \varepsilon) \rceil$ rounds. 
\end{theorem}

\paragraph{Putting it all together.}
We may now provide the proof of \Cref{theorem:realAA-pki}, where we describe the protocol $\realAA^\pki$.
\begin{proof}[Proof of \Cref{theorem:realAA-pki}]
    In $\realAA^\pki(\diameter, \varepsilon)$, the parties first check locally whether the condition $\log_2 \big((1 - 2c)/c\big)+ \log_2\log_2(\diameter/\varepsilon) > 0$ holds. If this is the case, the parties run $\realAA^\pki_1(\diameter, \varepsilon)$, described by \Cref{thm:proxcensus-to-AA}. Otherwise, the parties run $\realAA^\pki_2(\diameter, \varepsilon)$, described by \Cref{thm:pki-AA-pretty}. $\approximateagreement$ is achieved either way.
    
    Regarding round complexity, note that, if $\log_2 \big((1 - 2c)/c\big)+ \log_2\log_2(\diameter/\varepsilon) \leq 0$, then $\diameter/\varepsilon$ is a constant. This implies that $\realAA^\pki_2(\diameter, \varepsilon)$ according to \Cref{thm:pki-AA-pretty}, and hence $\realAA^\pki(\diameter, \varepsilon)$, terminates within $O(1)$ rounds. Otherwise, if $\log_2 \big((1-2c)/c\big)+ \log_2\log_2(\diameter/\varepsilon) > 0$, \Cref{thm:proxcensus-to-AA} ensures that $\realAA^\pki_1(\diameter, \varepsilon)$, and therefore $\realAA^\pki(\diameter, \varepsilon)$ terminates within $O\left( \frac{\log_2(\diameter/\varepsilon)}{\log_2\log_2(\diameter/\varepsilon)}
    \right)$ rounds.
\end{proof}

\section{Proof of \Cref{lemma:block-graphs:clique-separator}}\label{sec:appndix_proof_of_lemma}

In this section, we present the proof of \Cref{lemma:block-graphs:clique-separator}. For completeness, we first restate the lemma and then provide its proof.

\CUTVERTEX*

\begin{proof}
We first note that $\{u\} = \vertices_{\graph}(c)\cap \vertices_{\graph}(c')$: $c$ and $c'$ represent maximal cliques by the definition of clique trees and, by the definition of block graphs, every two maximal cliques intersect in at most one vertex. Assume for contradiction that there exists an $x$--$y$ path
$P := (v_0=x, v_1, \dots, v_m=y)$ in $G$ such that $u \notin \{v_0,\dots,v_m\}$. Without loss of generality, assume that $P$ is a shortest such path.

For each $i \in \{0,\dots,m-1\}$, let $c_i$ be a maximal clique of $G$ that contains the edge $\{v_i,v_{i+1}\}$.\footnote{Such a maximal clique exists since every edge of $G$ is contained in a block, and every block is a clique in a block graph.}
Then for every $i\in\{1,\dots,m-1\}$, $v_i \in \vertices_{\graph}(c_{i-1}) \cap \vertices_{\graph}(c_i)$, and hence $\vertices_{\graph}(c_{i-1}) \cap \vertices_{\graph}(c_i) \neq \emptyset$.
By the running intersection property of clique trees, every clique on the unique path in $\tree$ between $c_{i - 1} $ and $c_i$ also contains vertex $v_i$. Therefore, the union of all these clique-tree paths (over all indices $i$) forms a connected subtree $\tree_P$ of $\tree$ that contains all cliques $c_0$, $c_1$, $\dots$, $c_{m - 1}$. 

Since $x \in V_c$ and $y \in V_{c'}$, the subtree $\tree_P$ intersects both components $T_c$ and $T_{c'}$ of $T \setminus \{\{c,c'\}\}$. Therefore, $\tree_P$ must contain two adjacent clique vertices $d \in\vertices(T_c)$ and $d' \in\vertices(T_{c'})$ joined by the edge $\{c,c'\}$.

Since $d$ and $d'$ are adjacent in the clique tree, their corresponding cliques intersect.
By the running intersection property, we have that
$\vertices_{\graph}(d) \cap \vertices_{\graph}(d') \subseteq \vertices_{\graph}(c) \cap \vertices_{\graph}(c') = \{u\}.$ Hence $\vertices_{\graph}(d) \cap \vertices_{\graph}(d') = \{u\}$.

Because both $d$ and $d'$ belong to $\tree_P$, there exists a vertex $v_i$ of the path $P$
contained in both $\vertices_{\graph}(d)$ and $\vertices_{\graph}(d')$.
Thus $v_i = u$, contradicting the assumption that $P$ avoids $u$.
\end{proof}

\section{Analysis of $\listConstruction$} \label{appendix:trees}

We present the proof of Lemma \ref{lemma:list-construction}, describing the guarantees of algorithm $\listConstruction$. The parties run this algorithm locally to convert the input tree $T$ into a list representation that provides a few special properties.
\ListConstruction*
\begin{proof}
First, we prove that $\listConstruction(\tree, \roott)$ terminates in a finite amount of time. The algorithm follows a depth-first search ($\dfs$) traversal, which has a time complexity of $O(|V| + |E|)$ for a graph with vertex set $V$ and edge set $E$. Since $\tree$ is a tree, it has $|E| = |V| - 1$ edges, leading to an overall complexity of $O(|V|)$. Thus, the algorithm terminates within a finite number of steps. Then, we prove each property in order.

\paragraph{Property 1.} This follows directly from the definition of $\dfs$. At each step of the $\dfs$ traversal, if a vertex has an unvisited child, we move to that child. Otherwise, we backtrack to the parent. This ensures that each successive pair of vertices in $L$ represents either a move to a child or a return to a parent, both of which are adjacent in $T$.

\paragraph{Property 2.} We first show that $\abs{L} \leq 2 \cdot \abs{\vertices(\tree)}$ holds. We proceed by induction on the number of vertices $m := |\vertices(\tree)|$. 

\textbf{Base case:} If $\tree$ consists of a single vertex, then $L = [\roott]$. Since $\abs{L} = 1 \leq 2 \cdot |\vertices(\tree)| = 2$, the claim holds.

\textbf{Inductive step:} Suppose the claim holds for all trees with at most $m-1$ vertices. Consider a tree $\tree$ with $m$ vertices. Let $\tree'$ be the tree obtained by removing a leaf $v$ and its incident edge $(u,v)$.

By the induction hypothesis, the $\dfs$ traversal on $\tree'$ results in a list $L'$ of length at most $2(m-1)$. When we add $v$ back, the $\dfs$ first visits $v$ once and then backtracks to $u$, adding two elements to $L'$. Therefore, $|L| = |L'| + 2 \leq 2(m-1) + 2 = 2m$, which completes the induction.

To show that every vertex appears in $L$, note that the $\dfs$ visits each vertex at least once upon first exploration, ensuring that every vertex is included.

\paragraph{Property 3.} During $\dfs$, the first occurrence of $v$ in $L$ happens before visiting any of its descendants. The last occurrence happens after all descendants have been visited. Hence, the indices corresponding to $v$ enclose all its descendants in $L$. If $u$ is in the subtree of $v$, all its occurrences in $L$ appear between $i_{\min}$ and $i_{\max}$.
Conversely, if $L(u) \subseteq [i_{\min}, i_{\max}]$, then $\dfs$ must have discovered $u$ while visiting $v$'s subtree, confirming that $u$ is in the subtree rooted at $v$.

\paragraph{Property 4.} If one of $v$ or $v'$ is an ancestor of the other, then the lowest common ancestor ($\lca$) of two nodes $v$ and $v'$ is trivially in this range.

Otherwise, let $w := \lca(v, v')$. If $w \notin {L_k : \min(i, i') \leq k \leq \max(i, i')}$, then $\dfs$ would have visited $v$'s subtree and then $v'$'s subtree without encountering $w$. However, this contradicts the $\dfs$ property that it does not leave a subtree before visiting all its nodes. Thus, $w$ must appear in the interval, completing the proof.
\end{proof}

\end{document}